\newcommand{\trace}{{\rm Tr}}
\newcommand{\adjoint}{\dagger}
\newcommand{\norm}[1]{\left\|\,#1\,\right\|}       % norm
\newcommand{\enorm}[1]{\norm{#1}_{\mathrm{2}}}      % Euclidean norm for vectors
\newcommand{\fnorm}[1]{\norm{#1}_{\mathrm {F}}}    % frobenius norm
\newcommand{\set}[1]{{\left\{#1\right\}}}    % braces for set notation
\newcommand{\ve}[1]{\mathbf{#1}}
\newcommand{\abs}[1]{\left\lvert #1 \right\rvert}
\newcommand{\complex}{{\mathbb C}}
\newcommand{\reals}{{\mathbb R}}
\newcommand{\nats}{{\mathbb N}}
\newcommand{\discm}{\delta(\rho)}
\def\ket#1{ | #1 \rangle}
\def\bra#1{{\langle #1 | }}
\newcommand{\ketbra}[2]{\ket{#1}\!\bra{#2}}        % outer product
\def\tr{ {\rm{Tr }}}
\newcommand{\braket}[2]{\mbox{$\langle #1  | #2 \rangle$}}
\newcommand{\spa}[1]{\mathcal{#1}}
\newcommand{\A}{\spa{A}}
\newcommand{\B}{\spa{B}}
\newcommand{\dens}{\mathcal{D}(\A\otimes\B)}
\newcommand{\denstn}{\mathcal{D}(\complex^2\otimes\complex^N)}
\newcommand{\denstt}{\mathcal{D}(\complex^2\otimes\complex^2)}
\newcommand{\densdd}{\mathcal{D}(\complex^d\otimes\complex^d)}
\newcommand{\herm}{\mathcal{H}(\A\otimes\B)}
\newcommand{\unitaries}{\mathcal{U}(\A\otimes\B)}
\newcommand{\proj}{\set{\Pi_j^A}}
\newcommand{\projv}{\set{\Pi_j^A}_{\ve{v}}}
\newcommand{\rous}{{\rm RU}(\A)}
\newcommand{\UA}{U_A}
\newcommand{\UAv}{U^A_{\ve{v}}}
\newtheorem{theorem}{Theorem}
\newtheorem{lemma}{Lemma}
\newtheorem{cor}[theorem]{Corollary}
\begin{document}
\title{Quantifying non-classicality with local unitary operations}
\author{Sevag Gharibian}
 \affiliation{Institute for Quantum Computing and David R. Cheriton School of Computer Science,\\University of Waterloo, Waterloo, Canada}

\date{\today}
\begin{abstract}
We propose a measure of non-classical correlations in bipartite quantum states based on local unitary operations. We prove the measure is non-zero if and only if the quantum discord is non-zero; this is achieved via a new characterization of zero discord states in terms of the state's correlation matrix. Moreover, our scheme can be extended to ensure the same relationship holds even with a generalized version of quantum discord in which higher-rank projective measurements are allowed. We next derive a closed form expression for our scheme in the cases of Werner states and $(2\times N)$-dimensional systems. The latter reveals that for $(2\times N)$-dimensional states, our measure reduces to the geometric discord [Daki\'{c} et al., PRL 105, 2010]. A connection to the CHSH inequality is shown. We close with a characterization of all maximally non-classical, yet separable, $(2\times N)$-dimensional states of rank at most two (with respect to our measure).
\end{abstract}
\pacs{03.67.Mn, 03.65.Ud, 03.67.Ac}
\keywords{Non-classical correlations, quantum discord, local unitary operations}

% 03.65.Ud Entanglement and quantum nonlocality
% 03.67.Mn Entanglement measures, witnesses, and other characterizations
% 03.67.Lx Quantum computation architectures and implementations
\maketitle

%///////////////////////////////////////////////////////////////////////////////////////////////////////////////////
\section{Introduction}\label{scn:intro}
%///////////////////////////////////////////////////////////////////////////////////////////////////////////////////

One of the most intriguing aspects of quantum mechanics is quantum entanglement, which with the advent of quantum computing, was thrust into the limelight of quantum information theoretic research~\cite{hhhh09}. We now know that correlations in quantum states due to entanglement are necessary in order for \emph{pure-state} quantum computation to provide exponential speedups over its classical counterpart~\cite{jozsa03a}. With bipartite entanglement nowadays fairly well understood, however, attention has turned in recent years to a more general type of quantum correlation, dubbed simply \emph{non-classical correlations}. Unlike entanglement, such correlations \emph{can} be created via Local Operations and Classical Communication (LOCC), but nevertheless do not exist in the classical setting. Moreover, for certain \emph{mixed-state} quantum computational feats, the amount of entanglement present can be small or vanishing, such as in the DQC1 model of computing~\cite{kl98} and the locking of classical correlations~\cite{DHLST04}. In these settings, it is rather non-classical correlations which are the conjectured resource enabling such feats (see, e.g.~\cite{datta05a, datta08a, Luo08, DG08}). In fact, almost all quantum states possess non-classical correlations~\cite{ferraro}.

As a result, much attention has recently been devoted to the quantification of {non-classical correlations} (e.g.,~\cite{PhysRevA.71.062307,gpw05,PhysRevLett.104.080501,groismanquantumness,PhysRevA.77.052101,Luo2008,pianietal2008nolocalbrodcast,pianietal2009broadcastcopies,ADA,PhysRevLett.105.020503,PhysRevA.82.052342,DVB10,streltsov2011,PGACHW11}, see~\cite{MBCPV11} for a survey). Here, we say a bipartite state $\rho$ acting on Hilbert space $\A\otimes \B$ is \emph{classically correlated} in $\A$ if and only if there exists an orthonormal basis $\set{\ket{a}}$ for $\spa{A}$ such that
\[
    \rho = \sum_i p_i \ketbra{a_i}{a_i}\otimes\rho_i
\]
for $\set{p_i}$ a probability distribution and $\rho_i$ density operators. To quantify ``how far'' $\rho$ is from the form above, a number non-classicality measures, including perhaps the best-known such measure, the \emph{quantum discord}~\cite{ollivier01a,henderson01a}, ask the question of how drastically a bipartite quantum state is disturbed under local measurement on $\spa{A}$. In this paper, we take a different approach to the problem. We ask: \emph{Can disturbance of a bipartite system under local unitary operations be used to quantify non-classical correlations?}

It turns out that not only is the answer to this question \emph{yes}, but that in fact for $(2\times N)$-dimensional systems, the measure we construct coincides with the \emph{geometric quantum discord}~\cite{DVB10}, a scheme based again on local measurements. Our measure is defined as follows. Given a bipartite quantum state $\rho$ and unitary $\UA$ acting on Hilbert spaces $\A\otimes\B$ and $\A$ with dimensions $MN$ and $M$, respectively, define
\begin{equation}\label{eqn:measure_U_def}
    D(\rho,\UA) := \frac{1}{\sqrt{2}}\fnorm{\rho - \left(\UA\otimes I_B\right) \rho\left( \UA^\adjoint\otimes I_B\right)},
\end{equation}
where the Frobenius norm $\fnorm{A}=\sqrt{\trace{A^\dagger A}}$ is used due to its simple calculation. Then, consider the set of unitary operators whose eigenvalues are some permutation of the $M$-th roots of unity, i.e. whose vector of eigenvalues equals $\pi\ve{v}$ for $\pi\in S_M$ some permutation and $v_k = e^{2\pi ki/M}$ for $1\leq k \leq M$. We call such operators \emph{Root-of-Unity} (RU) unitaries. They include, for example, the Pauli $X$, $Y$, and $Z$ matrices. Then, letting $\rous$ denote the set of RU unitaries acting on $\A$, we define our measure as:
\begin{equation}\label{eqn:measure_def}
    D(\rho) := \min_{\UA \in\rous} D(\rho,\UA).
\end{equation}
Note that $0\leq D(\rho)\leq 1$ for all $\rho$ acting on $\A\otimes\B$. We now summarize our results regarding $D(\rho)$.\\

\noindent\textbf{Summary of results and organization of paper}

\vspace{1mm}
\noindent\textbf{(A)} Our first result is a closed-form expression for $D(\rho)$ for $(2\times N)$-dimensional systems (Sec.~\ref{scn:twoqubitpure}). This reveals that for $(2\times N)$-dimensional $\rho$, $D(\rho)$ coincides with the geometric discord of $\rho$. It also allows us to prove that, like the \emph{Fu distance}~\cite{f06,gkb08}, if $D(\rho)>1/\sqrt{2}$ for two-qubit $\rho$, then $\rho$ violates the Clauser-Horne-Shimony-Holt (CHSH) inequality~\cite{CHSH69}. The Fu distance, defined as the \emph{maximization} of Eqn.~(\ref{eqn:measure_U_def}) over all $\UA$ such that $[\UA,\trace_{B}(\rho)]=0$, was defined in Ref.~\cite{f06} and studied further in Refs.~\cite{gkb08} and~\cite{DG08} with regards to quantifying entanglement and non-classicality.\\

\vspace{-3mm}
\noindent\textbf{(B)} We next derive a closed form expression for $D(\rho)$ for Werner states (Sec.~\ref{scn:werner}), finding here that $D(\rho)$ in fact equals the Fu distance of $\rho$.\\

\vspace{-3mm}
\noindent\textbf{(C)} Sec.~\ref{scn:purearbdim} proves that only pure maximally entangled states $\rho$ achieve the maximum value $D(\rho)=1$. This is in contrast to the Fu distance, which can attain its maximum value even on non-maximally entangled pure states~\cite{gkb08}.\\

\vspace{-3mm}
\noindent\textbf{(D)} In Sec.~\ref{scn:discord}, we show that $D(\rho)$ is a \emph{faithful} non-classicality measure, i.e. it achieves a value of zero if and only if $\rho$ is classically correlated in $\spa{A}$. To prove this, we first derive a new characterization of states with zero quantum discord based on the correlation matrix of $\rho$. We then show that the states achieving $D(\rho)=0$ can be characterized in the same way. More generally, by extending our scheme to allow the eigenvalues of $\UA$ to have multiplicity at most $k$, we prove a state is undisturbed under $\UA$ if and only if there exists a projective measurement on $\spa{A}$ of rank at most $k$ acting invariantly on the state (Thm.~\ref{thm:gendiscequiv}). This reproduces in a simple fashion a result of Ref.~\cite{MAGGDI11} regarding entanglement quantification in the pure state setting. Based on this equivalence between disturbance under local unitary operations and local projective measurements, we propose a generalized definition of the quantum discord at the end of Sec.~\ref{scn:discord}. In terms of previous work, we note that unlike $D(\rho)$, the Fu distance is not a faithful non-classicality measure~\cite{DG08}. Alternative characterizations of zero discord states have been given in~\cite{ollivier01a,DVB10,D10}. \\

\vspace{-3mm}
\noindent\textbf{(E)} Finally, we characterize the set of maximally non-classical, yet separable, $(2\times N)$-dimensional $\rho$ of rank at most two, according to $D(\rho)$ (and hence according to the geometric discord) (Sec.~\ref{scn:maxnc}). Maximally non-classical separable two-qubit states have previously been studied, for example, in~\cite{GPACH11,GA11}. For example, the set of such states found in Ref.~\cite{GPACH11} with respect to the \emph{relative entropy of quantumness} matches our characterization for $D(\rho)$; we remark, however, that our analysis for $D(\rho)$ in this regard is more general than in~\cite{GPACH11} as it is based on a less restrictive ansatz.\\

\vspace{-3mm}
Sec.~\ref{scn:prelim} begins with necessary definitions and useful lemmas. We conclude in Sec.~\ref{scn:conclusion}. We remark that subsequent to the conception of our scheme, the present author learned that there has also been an excellent line of work studying (the square of) Eqn.~(\ref{eqn:measure_def}) in another setting --- that of \emph{pure state entanglement}. In Ref.~\cite{GI07}, it was found that in $(2\times N)$ and $(3\times N)$ systems, $D(\ketbra{\psi}{\psi})^2$ coincides with the \emph{linear entropy of entanglement}. Ref.~\cite{MAGGDI11} then showed that for arbitrary bipartite pure states, $D(\ketbra{\psi}{\psi})^2$ is a faithful entanglement monotone, and derived upper and lower bounds in terms of the linear entropy of entanglement.

\section{Preliminaries}\label{scn:prelim}
We begin by setting our notation, followed by relevant definitions and useful lemmas. Throughout this paper, we use $\A$ and $\B$ to denote complex Euclidean spaces of dimensions $M$ and $N$, respectively. $\dens$, $\herm$, and $\unitaries$ denote the sets of density, Hermitian, and unitary operators taking $\A\otimes\B$ to itself, respectively. We define $\rho_A := \trace_{B}(\rho)$ and $\rho_B := \trace_{A}(\rho)$, where $:=$ indicates a definition. The Frobenius norm of operator $A$ is $\fnorm{A}=\trace(\sqrt{A^\adjoint A})$, and the anti-commutator of $A$ and $B$ is $\set{A,B}=AB+BA$. The notation $\operatorname{diag}(\ve{v})$ for complex vector $\ve{v}$ denotes a diagonal matrix with $i$th diagonal entry $v_i$, and $\operatorname{span}(\set{\ve{v}_i})$ denotes the span of the set of vectors $\set{\ve{v}_i}$. The minimum (maximum) eigenvalue of Hermitian operator $A$ is denoted $\lambda_{\min}(A)$ ($\lambda_{\max}(A)$), and its $i$th largest eigenvalue is $\lambda_i(A)$. Finally, $\nats$ is the set of natural numbers.

Moving to definitions, in this paper we often decompose $\rho\in\dens$ in terms of a  Hermitian basis for $\herm$ (sometimes known as the Fano form~\cite{F83}):
\begin{eqnarray}\label{eqn:fano}
        \rho = &\frac{1}{MN}&(I^A\otimes I^B + \ve{r}^A\cdot\ve{\sigma}^A\otimes{I^B}+\hspace{10mm}\\&&
                    I^A\otimes\ve{r}^B\cdot\ve{\sigma}^B+\sum_{i=1}^{M^2-1}\sum_{j=1}^{N^2-1}T_{ij}\sigma^A_i\otimes\sigma^B_j).\nonumber
\end{eqnarray}
Here, $\ve{\sigma}^A$ is a $(M^2-1)$-component vector of
traceless orthogonal Hermitian basis elements $\sigma_i^A$ satisfying $\trace(\sigma_i^A\sigma_j^A)=2\delta_{ij}$, $\ve{r}^A\in\reals^{M^2-1}$
is the Bloch vector for subsystem $A$ with
$r^A_i=\frac{M}{2}\tr(\rho_A\sigma^A_i)$, and $T\in\reals^{(M^2-1)\times(N^2-1)}$ is the correlation matrix with entries
$T_{ij}=\frac{MN}{4}\tr(\sigma^A_i\otimes\sigma^B_j\rho)$. For $M=2$, $\ve{r}_A$ satisfies $0\leq\enorm{\ve{r}_A}\leq 1$ with $\enorm{\ve{r}_A} = 1$ if and only if $\rho_A$ is pure. The
definitions for subsystem $B$ are analogous. We now give a useful specific construction for the basis elements $\sigma_i^A$~\cite{he81}. Define
$\set{\sigma_i}_{i=1}^{M^2-1}= \set{U_{pq},V_{pq},W_{r}}$, such
that for $1\leq p<q\leq M$ and $1\leq r \leq M-1$, and
$\set{\ket{i}}_{i=1}^{M}$ some orthonormal basis for $\A$:
    \begin{eqnarray}
        U_{pq}&=&\ket{p}\bra{q}+\ket{q}\bra{p}\label{eqn:Ugenerators}\\
        V_{pq}&=&-i\ket{p}\bra{q}+i\ket{q}\bra{p}\label{eqn:Vgenerators}\\
        W_{r} &=&\sqrt{\frac{2}{r(r+1)}}\!\!\left(\sum_{k=1}^{r}\ket{k}\bra{k}-r\ket{r+1}\bra{r+1}\right).\label{eqn:Wgenerators}
    \end{eqnarray}
Note that when $M=2$, this construction yields the Pauli matrices $\ve{\sigma^A}=(X,Y,Z)$.

Regarding $D(\rho)$, defining $\rho_f := (\UA\otimes I_B) \rho(\UA^\dagger\otimes I_B)$, we often use the fact that Eqn.~(\ref{eqn:measure_def}) can be rewritten as:
\begin{equation}\label{eqn:measure_def2}
    D(\rho) = \min_{U_A\in\rous} \sqrt{\trace(\rho^2) -\trace(\rho\rho_f)}.
\end{equation}

Finally, we show a simple but important lemma.

\begin{lemma}\label{l:invariantlocal}
    $D(\rho)$ is invariant under local unitary operations.
\end{lemma}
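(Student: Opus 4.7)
The plan is to directly compute how the quantity inside the minimization in Eqn.~(\ref{eqn:measure_def}) transforms under a local unitary $V_A \otimes V_B$, and then show that the minimization set $\rous$ is itself invariant under the induced transformation, so the minimum is unchanged.

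Concretely, set $\rho' := (V_A \otimes V_B)\rho(V_A^\adjoint \otimes V_B^\adjoint)$ for arbitrary unitaries $V_A$ on $\A$ and $V_B$ on $\B$. For any fixed $U_A \in \rous$, I would substitute $\rho'$ into Eqn.~(\ref{eqn:measure_U_def}), factor $V_A \otimes V_B$ out on the left and its adjoint out on the right of the difference inside the Frobenius norm, and then invoke the unitary invariance of $\fnorm{\cdot}$. This yields
\begin{equation*}
D(\rho',U_A) = \frac{1}{\sqrt{2}}\fnorm{\rho - (V_A^\adjoint U_A V_A \otimes I_B)\rho(V_A^\adjoint U_A^\adjoint V_A \otimes I_B)} = D(\rho,\, V_A^\adjoint U_A V_A).
\end{equation*}
Notice that $V_B$ drops out entirely at this step, which already gives invariance on the $\B$ side for free.

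The remaining step is to observe that conjugation by $V_A$ preserves the spectrum of $U_A$. Since $\rous$ is defined purely in terms of eigenvalues (namely, any permutation of the $M$-th roots of unity), the map $U_A \mapsto V_A^\adjoint U_A V_A$ is a bijection from $\rous$ to itself. Hence
\begin{equation*}
D(\rho') = \min_{U_A \in \rous} D(\rho,\, V_A^\adjoint U_A V_A) = \min_{U_A' \in \rous} D(\rho, U_A') = D(\rho).
\end{equation*}

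There is no real obstacle here; the only point worth being careful about is ensuring the set $\rous$ is closed under unitary conjugation, which follows immediately from its eigenvalue-based definition. I would present the argument in this order --- reduction to conjugation of the minimizing unitary, followed by spectrum-preservation of the minimization domain --- since it cleanly separates the analytic content (unitary invariance of the norm) from the structural content (that $\rous$ is conjugation-invariant).
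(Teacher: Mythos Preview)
Your proposal is correct and essentially mirrors the paper's argument: both reduce $D(\rho',U_A)$ to $D(\rho,V_A^\adjoint U_A V_A)$ and then use that $\rous$ is closed under unitary conjugation. The only cosmetic difference is that the paper works via the expanded form of Eqn.~(\ref{eqn:measure_def2}) (splitting into $\trace(\rho^2)$ and $\trace(\rho\rho_f)$) rather than invoking unitary invariance of $\fnorm{\cdot}$ directly, but the logical content is identical.
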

\begin{proof}
    Let $\rho' := (V_A\otimes V_B)\rho (V_A\otimes V_B)^\dagger$ for unitaries $V_A$, $V_B$. Then in Eqn.~(\ref{eqn:measure_def2}), $\trace(\rho'^2)=\trace(\rho^2)$, and $\trace(\rho'\rho'_f)$ becomes
    \[
        \trace(\rho (V_A^\dagger\UA V_A\otimes I_B )\rho (V_A^\dagger\UA^\dagger V_A\otimes I_B )).
    \]
   Observe, however, that $V_A\UA V_A^\adjoint$ is still an RU unitary, since we have simply changed basis. Hence, $D(\rho',\UA)=D(\rho,V_{A}^\dagger\UA V_{A})$, and since we are minimizing over all $\UA\in\rous$, the claim follows.
\end{proof}

%///////////////////////////////////////////////////////////////////////////////////////////////////////////////////
\section{$(2\times N)$-Dimensional States}\label{scn:twoqubitpure}
%///////////////////////////////////////////////////////////////////////////////////////////////////////////////////
In this section, we study $D(\rho)$ for $\rho\in\denstn$, obtaining among other results a closed from expression for $D(\rho)$. To begin, note that any $U_A\in \rous$ must have the form
    \begin{equation}\label{eqn:2qU}
        U_A := \ketbra{c}{c}-\ketbra{d}{d}=2\ketbra{c}{c}-I_2,
    \end{equation}
    up to an irrelevant global phase which disappears upon application of $U_A$ to our system, and for some orthonormal basis $\set{\ket{c},\ket{d}}$ for $\complex^2$. Then, $D(\rho,\UA)$ can be rewritten as
    \begin{equation}\label{eqn:mx2start}
      2\sqrt{\trace[\rho^2 (\ketbra{c}{c}\otimes I) - \rho( \ketbra{c}{c}\otimes I)\rho (\ketbra{c}{c}\otimes I)]}.
    \end{equation}

We begin with a simple upper bound on $D(\rho)$.
\begin{theorem}\label{thm:upperbound1}
     For any $\rho\in \denstn$, one has
     \[
        D(\rho)\leq 2\sqrt{\lambda_{\min}(\trace_{\B}(\rho^2))}.
     \]
\end{theorem}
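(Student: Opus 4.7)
The plan is to exhibit a single RU unitary $U_A \in \rous$ for which $D(\rho, U_A)$ already achieves the stated bound, and then invoke the minimization defining $D(\rho)$. By Eqn.~(\ref{eqn:2qU}), every RU unitary on $\complex^2$ has the form $2\ketbra{c}{c} - I_2$, so I am free to pick the unit vector $\ket{c}\in\complex^2$. Setting $P := \ketbra{c}{c} \otimes I_B$, Eqn.~(\ref{eqn:mx2start}) gives
\[
    D(\rho, U_A)^2 \;=\; 4\bigl(\trace[\rho^2 P] - \trace[\rho P \rho P]\bigr),
\]
so the task reduces to handling these two traces.

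The second trace is manifestly non-negative: using $P^2 = P$ and cyclicity of the trace,
\[
    \trace[\rho P \rho P] \;=\; \trace[(P\rho P)^2] \;=\; \fnorm{P\rho P}^2 \;\geq\; 0,
\]
so it can simply be discarded in an upper bound. For the first trace, the key observation is that a partial trace over $B$ collapses it to a scalar expectation:
\[
    \trace[\rho^2(\ketbra{c}{c}\otimes I_B)] \;=\; \trace_A\!\left[\ketbra{c}{c}\trace_B(\rho^2)\right] \;=\; \bra{c}\trace_B(\rho^2)\ket{c}.
\]
Since $\trace_B(\rho^2)$ is Hermitian (and in fact positive semidefinite), choosing $\ket{c}$ to be one of its minimum-eigenvalue eigenvectors gives $\trace[\rho^2 P] = \lambda_{\min}(\trace_B(\rho^2))$.

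Plugging these two facts back into the displayed expression for $D(\rho, U_A)^2$ yields $D(\rho, U_A) \leq 2\sqrt{\lambda_{\min}(\trace_B(\rho^2))}$ for this specific $U_A$, and since $D(\rho)$ is defined as the minimum over $\rous$, the stated inequality follows. I do not expect a serious obstacle here: the only non-routine step is recognizing that $\trace[\rho^2 P]$ is the $\ket{c}$-expectation of $\trace_B(\rho^2)$, at which point the optimal choice of $\ket{c}$ and hence the bound are essentially forced.
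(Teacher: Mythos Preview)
Your proof is correct and essentially identical to the paper's: both start from Eqn.~(\ref{eqn:mx2start}), drop the non-negative term $\trace[\rho P\rho P]$, rewrite $\trace[\rho^2(\ketbra{c}{c}\otimes I_B)]$ as $\bra{c}\trace_B(\rho^2)\ket{c}$, and minimize over $\ket{c}$. The only difference is that you spell out the non-negativity via $\fnorm{P\rho P}^2$, whereas the paper simply asserts it.
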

\begin{proof}
    Starting with Eqn.~(\ref{eqn:mx2start}), by noting that $\trace[\rho( \ketbra{c}{c}\otimes I )\rho (\ketbra{c}{c}\otimes I)]\geq 0$ and using the fact that $\trace(\rho(C_A\otimes I_B))=\trace(\rho_A C_A)$, we have that $D(\rho)$ is at most
    \begin{eqnarray*}
        \min_{\text{unit }\ket{c}\in\complex^2}2\sqrt{\trace[\trace_{\B}(\rho^2)\ketbra{c}{c}]}
               = 2\sqrt{\lambda_{\min}(\trace_{\B}(\rho^2))}.\qedhere
    \end{eqnarray*}
\end{proof}

Thm.~\ref{thm:upperbound1} implies that for pure product $\ket{\psi}\in\complex^2\otimes\complex^N$, $D(\ketbra{\psi}{\psi})=0$, in agreement with the results in Ref.~\cite{GI07}. By next exploiting the structure of $\rho$ further, we obtain a closed form expression for $D(\rho)$.

\begin{theorem}\label{thm:closedform}
    For any $\rho\in \denstn$, define $G:=\ve{r}^A(\ve{r}^A)^T + \frac{2}{N}TT^T$. Then, $D(\rho)$ equals
%    \begin{equation}\label{eqn:2qd} \frac{1}{\sqrt{N}}\sqrt{\enorm{\ve{r}^A}^2+\frac{2}{N}\sum_{i,j=1}^3T_{ij}^2-\lambda_{\max}\left(\ve{r}^A(\ve{r}^A)^T + \frac{2}{N}TT^T\right)}.
%    \end{equation}
    \begin{equation}\label{eqn:2qd} \frac{1}{\sqrt{N}}\sqrt{\trace(G)-\lambda_{\max}(G)}=\frac{1}{\sqrt{N}}\sqrt{\lambda_2(G)+\lambda_3(G)}.
    \end{equation}
\end{theorem}
\begin{proof}
Define $P:=\ketbra{c}{c}$. Then, beginning with Eqn.~(\ref{eqn:mx2start}), by rewriting $\rho$ using Eqn.~(\ref{eqn:fano}) and applying the fact that the basis elements $\sigma_i$ are traceless, we obtain that $\trace(\rho^2 P\otimes I-\rho P\otimes I\rho P\otimes I)$ equals
\[
    \frac{1}{4N}\trace(A_1-A_2+A_3-A_4),
\]
where
\begin{eqnarray*}
    A_1 &:=& \left(\sum_i r_i^A{\sigma_i}^A\right)^2P\\
    A_2 &:=& \left(\sum_i r_i^A{\sigma_i}^AP\right)^2\\
    A_3 &:=& \frac{1}{N}\left(\sum_{ij}T_{ij}\sigma^A_i\otimes\sigma^B_j\right)^2(P\otimes I)\\
    A_4&:=&\frac{1}{N}\left(\sum_{ij}T_{ij}\sigma^A_i\otimes\sigma^B_j\right)\left(\sum_{ij}T_{ij}P\sigma^A_iP\otimes \sigma^B_j\right).
\end{eqnarray*}
Using the facts that $(\sigma_i^A)^2=I$, $\set{\sigma^A_i,\sigma^A_j}=0$ for $i\neq j$, $\trace(\sigma_i\sigma_j)=2\delta_{ij}$, and $\trace(P)=1$, we thus have
\begin{eqnarray*}
    \trace(A_1) &=& \enorm{\ve{r}^A}^2,\quad\quad\quad     \trace(A_3) = \frac{2}{N}\sum_{ij} T_{ij}^2\\
    \trace(A_2)&=&\sum_{ij} r_i^Ar_j^A\bra{c}\sigma^A_i\ket{c}\bra{c}\sigma^A_j\ket{c}\\
    \trace(A_4)&=&\frac{2}{N}\sum_{ij} \left(\sum_k T_{ik}T_{jk}\right)\bra{c}\sigma^A_i\ket{c}\bra{c}\sigma^A_j\ket{c}.
\end{eqnarray*}
Now, $\bra{c}\sigma^A_i\ket{c}$ can be thought of as the $i$th component of the Bloch vector $\ve{v}\in\reals^3$ of pure state $\ket{c}$ with $\enorm{\ve{v}}=1$, implying
\[
    \trace(A_2+A_4) = \ve{v}^T\left[\ve{r}^A(\ve{r}^A)^T+\frac{2}{N}TT^T\right]\ve{v}.
\]
Plugging these values into Eqn.~(\ref{eqn:mx2start}), we conclude $D(\rho)$ equals
\[ \min_{\substack{\ve{v}\in\reals^3\\\enorm{\ve{v}}=1}}\frac{1}{\sqrt{N}}\sqrt{\enorm{\ve{r}^A}^2+ \frac{2}{N}\sum_{ij} T_{ij}^2-\trace(A_2+A_4)}.
\]
The claim now follows since for any symmetric $A\in\reals^{n\times n}$,
$\max_{\text{unit }\ve{v}\in\reals^n}\ve{v}^T A\ve{v}=\lambda_{\max}(A)$.
\end{proof}

The expression for $D(\rho)$ in Thm.~\ref{thm:closedform} matches that for the \emph{geometric discord}~\cite{DVB10,VR12}. Specifically, defining the latter as $\delta_g(\rho)=\min_{\sigma\in\Omega} \sqrt{2}\fnorm{\rho-\sigma}$, where $\Omega$ is the set of zero-discord states, we have for $(2\times N)$-dimensional $\rho$ that $D(\rho)=\delta_g(\rho)$. (Note: The original definition of Ref.~\cite{DVB10} was more precisely $\delta_g(\rho)=\min_{\sigma\in\Omega} \fnorm{\rho-\sigma}^2$.) We now discuss consequences of Thm.~\ref{thm:closedform}, beginning with a lower bound which proves useful later.

\begin{cor}\label{cor:lower}
    For $\rho\in \denstn$, we have
    \begin{equation}\label{eqn:22qd} D(\rho)\geq \frac{\sqrt{2}}{N}\sqrt{\lambda_2(TT^T)+\lambda_3(TT^T)}.
    \end{equation}
    This holds with equality if $\ve{r}^A=0$, i.e. $\rho_A=\frac{I}{2}$.
\end{cor}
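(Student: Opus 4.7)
The plan is to start from the exact closed-form expression for $D(\rho)$ in Theorem~\ref{thm:closedform} and extract a lower bound by upper-bounding the troublesome term $\lambda_{\max}\bigl(\ve{r}^A(\ve{r}^A)^T+\frac{2}{N}TT^T\bigr)$. Since both $\ve{r}^A(\ve{r}^A)^T$ and $\frac{2}{N}TT^T$ are real symmetric positive semidefinite matrices, the natural tool is Weyl's inequality: $\lambda_{\max}(A+B)\leq\lambda_{\max}(A)+\lambda_{\max}(B)$ for Hermitian $A,B$.

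First I would apply Weyl's inequality to split the eigenvalue term as
\[
   \lambda_{\max}\!\left(\ve{r}^A(\ve{r}^A)^T+\tfrac{2}{N}TT^T\right)\;\leq\;\lambda_{\max}\!\left(\ve{r}^A(\ve{r}^A)^T\right)+\tfrac{2}{N}\lambda_{\max}(TT^T).
\]
Next, I would observe that $\ve{r}^A(\ve{r}^A)^T$ is a rank-one positive semidefinite matrix whose single nonzero eigenvalue is $\enorm{\ve{r}^A}^2$ (with eigenvector $\ve{r}^A$ itself), so $\lambda_{\max}(\ve{r}^A(\ve{r}^A)^T)=\enorm{\ve{r}^A}^2$. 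Substituting this into the expression from Theorem~\ref{thm:closedform}, the two $\enorm{\ve{r}^A}^2$ terms inside the square root cancel, leaving exactly
\[
   D(\rho)\;\geq\;\frac{1}{\sqrt{N}}\sqrt{\frac{2}{N}\Bigl(\sum_{i,j=1}^3 T_{ij}^2-\lambda_{\max}(TT^T)\Bigr)}\;=\;\frac{\sqrt{2}}{N}\sqrt{\sum_{i,j=1}^3 T_{ij}^2-\lambda_{\max}(TT^T)},
\]
which is the claimed bound.

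Finally, for the equality statement, I would note that when $\ve{r}^A=0$ the rank-one term $\ve{r}^A(\ve{r}^A)^T$ vanishes identically, so Weyl's inequality above holds trivially with equality and the derivation goes through as an equation rather than an inequality. The observation that $\ve{r}^A=0$ is equivalent to $\rho_A=I/2$ is immediate from the definition $r_i^A=\tfrac{M}{2}\trace(\rho_A\sigma_i^A)$ and the fact that the $\sigma_i^A$ together with $I$ form a basis for the Hermitian operators on $\A$.

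There is no serious obstacle here: the corollary is essentially an immediate consequence of Theorem~\ref{thm:closedform} together with subadditivity of $\lambda_{\max}$ on positive semidefinite matrices. The only item that requires any care is verifying that the rank-one PSD matrix $\ve{r}^A(\ve{r}^A)^T$ has maximum eigenvalue exactly $\enorm{\ve{r}^A}^2$ (so that the cancellation is exact), but this is standard.
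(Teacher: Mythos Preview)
Your argument is correct and is essentially identical to the paper's own proof: both hinge on the single inequality $\lambda_{\max}\bigl(\ve{r}^A(\ve{r}^A)^T+\tfrac{2}{N}TT^T\bigr)\leq\enorm{\ve{r}^A}^2+\tfrac{2}{N}\lambda_{\max}(TT^T)$, after which the $\enorm{\ve{r}^A}^2$ terms cancel in Eqn.~(\ref{eqn:2qd}), and both obtain the equality case by direct substitution of $\ve{r}^A=0$. Your version simply makes explicit (via Weyl's inequality and the rank-one eigenvalue computation) what the paper states in one line.
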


\begin{proof}
    The first claim follows from the fact that:
    \begin{eqnarray*}
        \lambda_{\max}\left(\ve{r}^A(\ve{r}^A)^T + \frac{2}{N}TT^T\right)\leq\enorm{\ve{r}^A}^2+ \frac{2}{N}\lambda_{\max}\left(TT^T\right).
    \end{eqnarray*}
    The second claim follows by substitution into Eqn.~(\ref{eqn:2qd}).
\end{proof}

For example, for maximally entangled $\ket{\psi}=(\ket{00}+\ket{11})/\sqrt{2}$, for which $\ve{r}^B=\ve{0}$ and $T=\operatorname{diag}(1,-1,1)$, Cor.~\ref{cor:lower} yields $D(\ketbra{\psi}{\psi})= 1$, as desired. We also remark that Eqn.~(\ref{eqn:2qd}) can further be simplified for two-qubit states, since by Ref.~\cite{HH96,HH96_2}, one can assume without loss of generality that $T$ is diagonal. This relies on the facts that (1) applying local unitary $V_1\otimes V_2$ to $\rho$ has the effect of mapping $T\mapsto O_1TO_2^\dagger$, $\ve{r}^A\mapsto O_1\ve{r}^A$, and $\ve{r}^B\mapsto O_2\ve{r}^B$ for some orthogonal rotation matrices $O_1$ and $O_2$, and (2) $D(\rho)$ is invariant under local unitaries by Lem.~\ref{l:invariantlocal}.

Using Cor.~\ref{cor:lower}, we next obtain a connection to the CHSH inequality for two-qubit $\rho$. Defining $M(\rho) := \lambda_1(T^T T)+\lambda_2(T^TT)$, it is known that $\rho$ violates the CHSH inequality if and only if $M(\rho)>1$~\cite{HHH95}. We thus have:

\begin{cor}\label{cor:CHSH}
    For $\rho\in \denstt$, if $D(\rho)>1/\sqrt{2}$, then $M(\rho)>1$. The converse does not hold.
\end{cor}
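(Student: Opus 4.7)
The plan is to prove the contrapositive: if $M(\rho)\leq 1$, then $D(\rho)\leq 1/\sqrt{2}$, i.e.\ to establish the tighter relation $D(\rho)^2 \leq \tfrac{1}{2}M(\rho)$. I will start from the closed-form expression in Thm.~\ref{thm:closedform} specialized to $N=2$. Letting $C := \ve{r}^A(\ve{r}^A)^T + TT^T$, which is a $3\times 3$ PSD matrix, observe that
\[
    \trace(C) = \enorm{\ve{r}^A}^2 + \sum_{i,j=1}^3 T_{ij}^2,
\]
so Thm.~\ref{thm:closedform} gives $D(\rho)^2 = \tfrac{1}{2}\bigl(\trace(C)-\lambda_{\max}(C)\bigr) = \tfrac{1}{2}\bigl(\lambda_2(C)+\lambda_3(C)\bigr)$, where $\lambda_1\geq\lambda_2\geq\lambda_3$ denote the ordered eigenvalues. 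Rewriting $D(\rho)^2$ as the sum of the two \emph{smaller} eigenvalues of $C$ is the key observation that makes the rest tractable.

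Next I would invoke Weyl's inequality for Hermitian matrices, $\lambda_{i+j-1}(A+B)\leq \lambda_i(A)+\lambda_j(B)$, applied to $A=\ve{r}^A(\ve{r}^A)^T$ and $B=TT^T$. Since $A$ has rank one, its spectrum is $(\enorm{\ve{r}^A}^2,0,0)$, so taking $(i,j)=(2,1)$ and $(i,j)=(2,2)$ yields $\lambda_2(C)\leq \lambda_1(TT^T)$ and $\lambda_3(C)\leq \lambda_2(TT^T)$ respectively. Summing these and using $\lambda_i(TT^T)=\lambda_i(T^TT)$ gives $\lambda_2(C)+\lambda_3(C)\leq \lambda_1(T^TT)+\lambda_2(T^TT) = M(\rho)$, whence $D(\rho)^2\leq \tfrac{1}{2}M(\rho)$. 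The contrapositive then yields the claimed implication $D(\rho)>1/\sqrt{2}\Rightarrow M(\rho)>1$.

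For the failure of the converse, I would exhibit the one-parameter family of pure states $\ket{\psi_\theta} = \cos\theta\ket{00}+\sin\theta\ket{11}$. A direct computation shows that in the Fano form $\ve{r}^A = (0,0,\cos 2\theta)$ and $T=\operatorname{diag}(\sin 2\theta,-\sin 2\theta,1)$, giving $M(\rho)=1+\sin^2 2\theta > 1$ for every $\theta\in(0,\pi/4)$. Applying Thm.~\ref{thm:closedform} (the top eigenvector of $\ve{r}^A(\ve{r}^A)^T+TT^T$ is easily identified as $\ve{e}_3$) yields $D(\ketbra{\psi_\theta}{\psi_\theta})=|\sin 2\theta|$, which is strictly less than $1/\sqrt{2}$ for $\theta\in(0,\pi/8)$. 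This provides the counterexample.

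The only genuinely tricky step is spotting that the bound on $D(\rho)^2$ should be phrased as a constraint on $\lambda_2(C)+\lambda_3(C)$ and that Weyl's inequality \emph{with} the rank-one degeneracy of $\ve{r}^A(\ve{r}^A)^T$ is exactly sharp enough to bring in $M(\rho)$; every other attempt (Cor.~\ref{cor:lower}, crude bounds using $\lambda_{\max}(C)\geq\lambda_1(TT^T)$) loses a factor. Once this is seen, the computation is routine and the counterexample is immediate.
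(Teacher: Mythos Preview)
Your proof is correct. The counterexample for the converse is exactly the paper's (the same one-parameter family of pure states, just parameterized by $\theta$ instead of $a$), so there is nothing to add there.

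For the forward implication, your route genuinely differs from the paper's and is in fact more complete. The paper invokes Cor.~\ref{cor:lower}, which for $N=2$ reads $D(\rho)^2\geq \tfrac{1}{2}\bigl(\lambda_2(TT^T)+\lambda_3(TT^T)\bigr)$; but this is a \emph{lower} bound on $D(\rho)^2$ in terms of the two \emph{smallest} eigenvalues of $TT^T$, and by itself does not deliver the upper bound $D(\rho)^2\leq\tfrac{1}{2}M(\rho)$ that the contrapositive requires. Your argument supplies exactly this missing step: writing $D(\rho)^2=\tfrac{1}{2}(\lambda_2(C)+\lambda_3(C))$ and then using Weyl's inequality with the rank-one structure of $\ve{r}^A(\ve{r}^A)^T$ to get $\lambda_2(C)\leq\lambda_1(TT^T)$ and $\lambda_3(C)\leq\lambda_2(TT^T)$ is the right tool, and your closing remark that cruder bounds (including Cor.~\ref{cor:lower}) fall short is on the mark. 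What your approach buys is a clean quantitative inequality $D(\rho)^2\leq\tfrac{1}{2}M(\rho)$ valid for all two-qubit states, which is strictly stronger than the threshold statement.
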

\begin{proof}
    The first is immediate from Cor.~\ref{cor:lower} and the fact that $TT^T$ and $T^TT$ are cospectral (Thm. 1.3.20 of~\cite{HJ90}). The converse proceeds similarly to Thm. 7 of Ref.~\cite{gkb08} --- namely, let $\ket{\psi}=a\ket{00}+b\ket{11}$ for real $a,b\geq 0$ and $a^2+b^2=1$. Then, for density operator $\ketbra{\psi}{\psi}$, we have $\ve{r}^B=(0,0,a^2-b^2)$ and $T=\operatorname{diag}(2ab,-2ab,1)$, implying $M(\ketbra{\psi}{\psi})>1$ for $a,b\neq 0$. In comparison, $D(\ketbra{\psi}{\psi})=2ab\leq1/\sqrt{2}$ when $a\leq \sqrt{\frac{1}{2}- \frac{1}{2\sqrt{2}}}$ or $a\geq \sqrt{\frac{1}{2}+ \frac{1}{2\sqrt{2}}}$.
\end{proof}
Interestingly, the exact same relationship as that in Cor.~\ref{cor:CHSH} was found between the Fu distance and the CHSH inequality in Ref.~\cite{gkb08}.

%///////////////////////////////////////////////////////////////////////////////////////////////////////////////////
\section{Werner States}\label{scn:werner}
%///////////////////////////////////////////////////////////////////////////////////////////////////////////////////

We now derive a closed formula for $D(\rho)$ for Werner states $\rho\in \densdd$ where $d\geq 2$, which are defined as~\cite{W89}
\[ \rho:=\frac{2p}{d^2+d}P_{s} + \frac{2(1-p)}{d^2-d}P_a, \]
for $P_s:= (I+P)/2$ and $P_a:=(I-P)/2$ the projectors onto the symmetric and anti-symmetric subspaces, respectively, $P:=\sum_{i,j=1}^d\ketbra{i}{j}\otimes\ketbra{j}{i}$ the SWAP operator, and $0\leq p \leq 1$. Werner states are invariant under $U\otimes U$ for any unitary $U$, and are entangled if and only if $p< 1/2$.
\begin{theorem}\label{thm:werner}
    Let $\rho\in \densdd$ be a Werner state. Then
    \[  D(\rho) = \frac{\abs{2pd-d-1}}{d^2-1}.
    \]
\end{theorem}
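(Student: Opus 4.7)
The plan is to exploit the remarkably simple structure of Werner states in the $\set{I,P}$ basis. First, by expanding $P_s = (I+P)/2$ and $P_a = (I-P)/2$, one writes $\rho = \alpha I + \beta P$ with
\[
\beta \;=\; \frac{p}{d(d+1)} - \frac{1-p}{d(d-1)} \;=\; \frac{2pd - d - 1}{d(d^2-1)},
\]
(the exact value of $\alpha$ will drop out of the final answer). Using $P^2 = I$ and $\trace(P) = d$, one immediately obtains $\trace(\rho^2) = (\alpha^2+\beta^2)d^2 + 2\alpha\beta d$.

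The main computation is $\trace(\rho\rho_f)$ from Eqn.~(\ref{eqn:measure_def2}), where $\rho_f = (U_A \otimes I_B)\rho(U_A^\adjoint \otimes I_B)$. The central tool is the SWAP identity $(C \otimes D)P = P(D \otimes C)$, which gives $(U_A \otimes I)P(U_A^\adjoint \otimes I) = (U_A \otimes U_A^\adjoint)P$. Combined with the standard identity $\trace((C \otimes D)P) = \trace(CD)$, the four terms arising from expanding $\rho\rho_f$ evaluate to $d^2$, $d$, $d$, and $\abs{\trace(U_A)}^2$, respectively. Everything cancels cleanly to give
\[
\trace(\rho^2) - \trace(\rho\rho_f) \;=\; \beta^2\bigl(d^2 - \abs{\trace(U_A)}^2\bigr).
\]

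At this point the minimization over $\rous$ trivializes. Any $U_A \in \rous$ has spectrum equal to the complete set of $d$-th roots of unity, which sum to zero; hence $\trace(U_A) = 0$ for every RU unitary, and $D(\rho,U_A) = d\abs{\beta}$ uniformly on $\rous$. Substituting the expression for $\beta$ yields $D(\rho) = d\abs{\beta} = \abs{2pd - d - 1}/(d^2 - 1)$, as claimed.

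There is no substantive obstacle beyond careful bookkeeping. The two useful observations --- pairing the $U\otimes U$ symmetry of $\rho$ with the SWAP operator to linearize the conjugation $U_A \mapsto \rho_f$, and recognizing that RU unitaries are traceless by construction --- cause the minimization to collapse to a single arithmetic identity. The only point worth double-checking is combining the two fractions in $\beta$ over the common denominator $d(d^2-1)$.
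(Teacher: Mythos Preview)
Your proof is correct and follows essentially the same route as the paper's: both reduce $D(\rho,U_A)^2$ to a constant times $d^2-\abs{\trace(U_A)}^2$ via the SWAP identities $\trace(P)=d$, $\trace(P^2)=d^2$, and $\trace\bigl(P(U_A\otimes I)P(U_A^\adjoint\otimes I)\bigr)=\abs{\trace(U_A)}^2$, and then invoke $\trace(U_A)=0$ for $U_A\in\rous$. The only cosmetic difference is that the paper borrows the rewriting from Ref.~\cite{gkb08} and states the formula for $D(\rho,U_A)$ directly, whereas you spell out the $\rho=\alpha I+\beta P$ decomposition and the SWAP-commutation step explicitly (and happen to use the symbol $\beta$ for a different quantity than the paper does).
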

\begin{proof}
    As done in Thm. 3 of Ref.~\cite{gkb08}, we first rewrite Eqn.~\ref{eqn:measure_def2} using the facts that $\trace(P)=d$, $\trace(P^2)=d^2$, and $\beta:=\trace(P(U_{A}\otimes I)P(U_{A}\otimes I)^\dagger)=\trace(U_{A})\trace(U_{A}^\dagger)$ to obtain that for any $U_{A}\in U(\A)$,
    \[
        D(\rho, U_{A})=\frac{\sqrt{(2pd-d-1)^2(d^2-\beta)}}{d(d^2-1)}.
    \]
    Since $\trace(U_{A})=0$ for any $U_{A}\in\rous$, we have $\beta=0$ and the claim follows.
\end{proof}

Again, we find that this coincides exactly with the expression for the Fu distance for Werner states~\cite{gkb08}. Further, Thm.~\ref{thm:werner} implies that the quantum discord of Werner state $\rho$ is zero if and only if $p=(d+1)/2d$. This matches the results of Chitambar~\cite{C11}, who develops the following closed formula for the discord $\discm$ of Werner states:
\begin{eqnarray}
    \discm &=&\log (d+1) + (1-p)\log\frac{1-p}{d-1}+p\log\frac{p}{d+1} -\nonumber\\ &&\frac{2p}{d+1}\log p- \left(1-\frac{2p}{d+1}\right)\log \frac{d+1-2p}{2(d-1)}\label{eqn:discWerner}.
\end{eqnarray}
In Sec.~\ref{scn:discord}, we show that this is no coincidence --- it turns out that $D(\rho)= 0$ if and only if the discord of $\rho$ is zero for any $\rho$.

%///////////////////////////////////////////////////////////////////////////////////////////////////////////////////
\section{Pure States of Arbitrary Dimension}\label{scn:purearbdim}
%///////////////////////////////////////////////////////////////////////////////////////////////////////////////////

We now show that only pure maximally entangled states $\rho$ achieve $D(\rho)=1$. As mentioned in Sec.~\ref{scn:intro}, this is in contrast to the Fu distance~\cite{f06,gkb08}, whose maximal value is attained even for certain \emph{non-maximally} entangled $\ket{\psi}$. We remark that Thm.~\ref{thm:arbdimpure} below also follows from a more general non-trivial result that $D(\ketbra{\psi}{\psi})^2$ is tightly upper bounded by the linear entropy of entanglement of pure state $\ket{\psi}$~\cite{MAGGDI11}. However, our proof of Thm.~\ref{thm:arbdimpure} is much simpler and requires only elementary linear algebra.

%Assume without loss of generality that $M\leq N$, and write arbitrary $\ket{\psi}\in \A\otimes\B$ in its Schmidt decomposition as
%%\begin{equation}\label{eqn:arbdimpureschmidt}
%    $\ket{\psi}=\sum_{k=1}^{M}\alpha_k\ket{a_k}\otimes\ket{b_k}$,
%%\end{equation}
%where $\sum_k\alpha_k^2=1$ for $\alpha_k\in\reals$ and $\set{\ket{a_k}}$ and $\set{\ket{b_k}}$ are the Schmidt bases for $\A$ and $\B$, respectively. Then, Eqn.~(\ref{eqn:measure_def2}) can be rewritten as
%\begin{equation}\label{eqn:arbdimpureD}
%    D(\ketbra{\psi}{\psi}) = \min_{\UA\in\rous}\sqrt{1- \abs{\sum_{k=1}^{M}\alpha_k^2\bra{a_k}\UA\ket{a_k}}^2}.
%\end{equation}

To begin, assume without loss of generality that $M\leq N$, and let $\ket{\psi}\in\A\otimes\B$ be a pure quantum state with Schmidt decomposition $\ket{\psi}=\sum_{k=1}^{M}\alpha_k\ket{a_k}\otimes\ket{b_k}$, i.e. $\sum_k\alpha_k^2=1$ for $\alpha_k\in\reals$ and $\set{\ket{a_k}}$ and $\set{\ket{b_k}}$ the Schmidt bases for $\A$ and $\B$, respectively.

\begin{theorem}\label{thm:arbdimpure}
Let $\ket{\psi}\in\A\otimes\B$ with Schmidt decomposition as above. Then $D(\ketbra{\psi}{\psi})=1$ if and only if $\alpha_k=\frac{1}{\sqrt{M}}$ for all $1\leq k \leq M$ (i.e. $\ket{\psi}$ is maximally entangled).
\end{theorem}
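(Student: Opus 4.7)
The plan is to exploit purity to reduce $D(\ketbra{\psi}{\psi})$ to a condition involving only $\rho_A$, and then to characterize when that condition holds using the rigid structure of RU unitaries (distinct roots of unity as eigenvalues).

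First I would rewrite the quantity via Eqn.~(\ref{eqn:measure_def2}). Since $\trace(\ketbra{\psi}{\psi}^2)=1$, and
\[
\trace(\rho\rho_f)=\bra{\psi}(U_A\otimes I_B)\ketbra{\psi}{\psi}(U_A^\dagger\otimes I_B)\ket{\psi}=\abs{\bra{\psi}(U_A\otimes I_B)\ket{\psi}}^2=\abs{\trace(U_A\rho_A)}^2,
\]
one obtains
\[
D(\ketbra{\psi}{\psi})^2=1-\max_{U_A\in\rous}\abs{\trace(U_A\rho_A)}^2.
\]
Thus $D(\ketbra{\psi}{\psi})=1$ if and only if $\trace(U_A\rho_A)=0$ for \emph{every} $U_A\in\rous$.

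The forward direction is then immediate: if $\alpha_k=1/\sqrt{M}$ for all $k$, then $\rho_A=I_A/M$, and for any $U_A\in\rous$ one has $\trace(U_A\rho_A)=\frac{1}{M}\trace(U_A)=\frac{1}{M}\sum_{k=1}^M e^{2\pi ik/M}=0$ since the full set of $M$-th roots of unity sums to zero.

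For the converse, which is the only real content, I would restrict the universal quantifier over $\rous$ to a convenient subfamily. Diagonalize $\rho_A=\sum_k \alpha_k^2\ketbra{e_k}{e_k}$ in its Schmidt basis and, for each permutation $\pi\in S_M$, take $U_A^\pi:=\sum_k e^{2\pi i\pi(k)/M}\ketbra{e_k}{e_k}\in\rous$. Then $\trace(U_A^\pi\rho_A)=\sum_k e^{2\pi i\pi(k)/M}\alpha_k^2=0$ for every $\pi\in S_M$. Comparing $\pi$ with $\pi\circ(i,j)$ (a single transposition) and subtracting yields
\[
\bigl(e^{2\pi i\pi(i)/M}-e^{2\pi i\pi(j)/M}\bigr)\bigl(\alpha_i^2-\alpha_j^2\bigr)=0,
\]
and since the $M$-th roots of unity are distinct we conclude $\alpha_i^2=\alpha_j^2$ for all $i,j$, forcing $\alpha_k=1/\sqrt{M}$. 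The only mildly subtle step is recognising that one need only test RU unitaries diagonal in the Schmidt basis of $\ket{\psi}$; once that reduction is made the transposition trick is elementary, which is why this argument is strictly simpler than the linear-entropy route of Ref.~\cite{MAGGDI11}.
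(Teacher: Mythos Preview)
Your argument is correct and follows essentially the same route as the paper: reduce via Eqn.~(\ref{eqn:measure_def2}) to $D(\ketbra{\psi}{\psi})^2=1-\max_{U_A\in\rous}\abs{\trace(U_A\rho_A)}^2$, then restrict to RU unitaries diagonal in the Schmidt basis to obtain $\sum_k e^{2\pi i\pi(k)/M}\alpha_k^2=0$ for all $\pi\in S_M$. The paper simply asserts at this point that the vector of $\alpha_k^2$ must be constant; your transposition-subtraction step makes that assertion explicit, but it is the same idea carried out in the same framework.
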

\begin{proof}
We begin by rewriting Eqn.~(\ref{eqn:measure_def2}) as
\begin{equation}\label{eqn:arbdimpureD}
    D(\ketbra{\psi}{\psi}) = \min_{\UA\in\rous}\sqrt{1- \abs{\sum_{k=1}^{M}\alpha_k^2\bra{a_k}\UA\ket{a_k}}^2}.
\end{equation}
If $\ket{\psi}$ is maximally entangled, then $\alpha_k = 1/\sqrt{M}$ for all $1\leq k\leq M$. Then, since $\UA\in\rous$, Eqn.~(\ref{eqn:arbdimpureD}) yields
\begin{eqnarray*}
    D(\ketbra{\psi}{\psi}) =% \min_{\UA\in\rous}\sqrt{1- \frac{1}{M^2}\abs{\sum_{k=1}^{M}\bra{a_k}\UA\ket{a_k}}^2}\\
     \min_{\UA\in\rous}\sqrt{1- \frac{1}{M^2}\abs{\trace(\UA)}^2}
    = 1.
\end{eqnarray*}

For the converse, assume $D(\ketbra{\psi}{\psi})=1$. Then, by Eqn.~(\ref{eqn:arbdimpureD}), we must have that for all $\UA\in\rous$,
\begin{equation}\label{eqn:arbdim_pure_proof}
    \sum_{k=1}^{M}\alpha_k^2\bra{a_k}\UA\ket{a_k}=0.
\end{equation}
Thus, choosing $\UA$ as diagonal in basis $\set{\ket{a_k}}$, Eqn.~(\ref{eqn:arbdim_pure_proof}) equivalently says that $\ve{w}^T\pi\ve{v}=0$ for all permutations $\pi\in S_M$, where ${w}_k:=\alpha_k^2$ and ${v}_k := e^{2\pi ki/M}$. This can only hold, however, if all entries of $\ve{w}$ are the same, i.e. $\alpha_k=1/\sqrt{M}$ for all $1\leq k\leq M$, as desired.
\end{proof}

\begin{cor}
    A quantum state $\rho\in\dens$ achieves $D(\rho)=1$ if and only if $\rho$ is pure and maximally entangled.
\end{cor}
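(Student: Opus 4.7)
The plan is to bootstrap the corollary from Thm.~\ref{thm:arbdimpure} by first showing that $D(\rho)=1$ forces $\rho$ to be pure, and then invoking the theorem to conclude maximal entanglement. The reverse implication is immediate: if $\rho$ is pure and maximally entangled, then Thm.~\ref{thm:arbdimpure} directly gives $D(\rho)=1$.

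For the forward implication, the key observation comes from Eqn.~(\ref{eqn:measure_def2}), which states
\[
    D(\rho)^2 = \min_{\UA\in\rous}\bigl[\trace(\rho^2) - \trace(\rho \rho_f)\bigr],
\]
where $\rho_f = (\UA\otimes I_B)\rho(\UA^\adjoint\otimes I_B)$. Since $\rho$ is positive semidefinite and $\rho_f$ is a unitary conjugate of $\rho$ (hence also positive semidefinite), the quantity $\trace(\rho \rho_f)$ is non-negative. Therefore, for every $\UA\in\rous$,
\[
    \trace(\rho^2) - \trace(\rho \rho_f) \leq \trace(\rho^2),
\]
so $D(\rho)^2 \leq \trace(\rho^2)$. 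Combined with the standard bound $\trace(\rho^2) \leq 1$ (with equality if and only if $\rho$ is pure), the assumption $D(\rho)=1$ forces $\trace(\rho^2) = 1$, and hence $\rho = \ketbra{\psi}{\psi}$ for some unit vector $\ket{\psi}\in\A\otimes\B$.

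Having reduced to the pure-state case, Thm.~\ref{thm:arbdimpure} now applies directly: $D(\ketbra{\psi}{\psi})=1$ if and only if the Schmidt coefficients of $\ket{\psi}$ are all equal, i.e. $\ket{\psi}$ is maximally entangled. This completes the proof. There is no real obstacle here — the argument is essentially a two-line chain of elementary inequalities combined with the already-established pure-state result — so I would present it as a short paragraph rather than a multi-step proof.
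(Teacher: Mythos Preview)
Your argument is correct and follows exactly the route the paper takes: the paper's proof is simply ``Immediate from Thm.~\ref{thm:arbdimpure} and the $\trace(\rho^2)$ in Eqn.~(\ref{eqn:measure_def2}),'' which is precisely the bound $D(\rho)^2\leq\trace(\rho^2)$ (via $\trace(\rho\rho_f)\geq 0$) that you spell out, followed by invoking the pure-state theorem.
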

\begin{proof}
    Immediate from Thm.~\ref{thm:arbdimpure} and the $\trace(\rho^2)$ in Eqn.~(\ref{eqn:measure_def2}).
\end{proof}

%///////////////////////////////////////////////////////////////////////////////////////////////////////////////////
\section{Relationship to Quantum Discord}\label{scn:discord}
%///////////////////////////////////////////////////////////////////////////////////////////////////////////////////
We now show that for arbitrary $\rho\in\dens$, $D(\rho)$ is zero if and only if the quantum discord of $\rho$ is zero. The discord is defined as follows~\cite{ollivier01a}:
\begin{equation}\label{eqn:discord_def}
    \discm := S(A)-S(A,B)+\min_{\proj}S(B|\proj),
\end{equation}
where $\proj$ corresponds to a complete measurement on subsystem $B$ consisting of rank $1$ projectors, $S(B)=-\tr(\rho_B\log(\rho_B))$ is the von Neumann entropy of $\rho_B$, similarly $S(A,B)=S(\rho)$, and
\begin{equation}
    S(B|\proj) = \sum_{j}p_jS\left(\frac{1}{p_j}\Pi_j^A\otimes I^B\rho \Pi_j^A\otimes I^B\right),
\end{equation}
where $p_j=\tr(\Pi_j^A\otimes I^B\rho)$. Here, the main fact we leverage about the discord is the following.
\begin{theorem}[Ollivier and Zurek~\cite{ollivier01a}]\label{thm:olzu}
    For $\rho\in\dens$, $\discm=0$ if and only if
    \begin{equation}\label{eqn:disc_exp_proof_eq1}
        \rho = \sum_j \Pi_j^A\otimes I^B\rho \Pi_j^A\otimes I^B,
    \end{equation}
    for some complete set of rank $1$ projectors $\proj$.
\end{theorem}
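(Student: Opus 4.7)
The plan is to prove both implications, handling the easy ``if'' direction by direct computation and the harder ``only if'' direction via a relative-entropy identity.

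For the ``if'' direction, suppose $\rho = \sum_j (\Pi_j^A \otimes I^B) \rho (\Pi_j^A \otimes I^B)$ with rank-$1$ projectors $\Pi_j^A = \ketbra{a_j}{a_j}$. I would first observe that this fixed-point equation, together with the rank-$1$ property, forces $\rho$ into the classical-quantum form $\rho = \sum_j p_j \ketbra{a_j}{a_j} \otimes \rho_j$, where $p_j = \tr[(\Pi_j^A \otimes I^B)\rho]$ and $\rho_j = \bra{a_j}_A \rho \ket{a_j}_A / p_j$. The chain rule for classical-quantum states then yields $S(A) = H(\ve{p})$, $S(A,B) = H(\ve{p}) + \sum_j p_j S(\rho_j)$, and $S(B|\proj) = \sum_j p_j S(\rho_j)$ for this same measurement. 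Substituting these into Eqn.~(\ref{eqn:discord_def}) gives $\discm \le 0$; the general non-negativity $\discm \ge 0$ (standard, e.g.\ via the data-processing inequality) then forces equality.

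For the ``only if'' direction, suppose $\discm = 0$ is attained by $\proj$, and define the pinching channel $\Lambda(X) := \sum_j (\Pi_j^A \otimes I^B)\, X\, (\Pi_j^A \otimes I^B)$. My goal is to show $\rho = \Lambda(\rho)$ via Klein's inequality applied to $S(\rho \| \Lambda(\rho))$. Since $\Lambda(\rho) = \sum_j p_j \ketbra{a_j}{a_j} \otimes \rho_j$ is block-diagonal in the $A$ basis, on its support one has $\log \Lambda(\rho) = \sum_j \Pi_j^A \otimes (\log p_j \, I^B + \log \rho_j)$, and a short computation using $\bra{a_j}_A \rho \ket{a_j}_A = p_j \rho_j$ yields $\tr[\rho \log \Lambda(\rho)] = -S(\Lambda(\rho))$, hence the neat identity $S(\rho \| \Lambda(\rho)) = S(\Lambda(\rho)) - S(\rho)$. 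Inserting $S(\Lambda(\rho)) = H(\ve{p}) + \sum_j p_j S(\rho_j)$ together with the discord-zero relation $S(\rho) = S(A) + \sum_j p_j S(\rho_j)$ collapses this to
\begin{equation}
S(\rho \| \Lambda(\rho)) \;=\; H(\ve{p}) - S(A).
\end{equation}

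The main obstacle is closing this last gap, i.e.\ deducing $H(\ve{p}) = S(A)$, which is equivalent to $\rho_A$ being diagonal in $\{\ket{a_j}\}$ on its support. Pinching of $\rho_A$ by $\proj$ alone only gives the one-sided inequality $H(\ve{p}) \ge S(A)$, so one needs extra structure from the optimality of $\proj$. My preferred route is to recast $\discm = 0$ as saturation of the data-processing inequality for mutual information under the ``measure-and-store-outcome'' channel $\mathcal{M}: X \mapsto \sum_j \ket{j}\bra{a_j}\, X\, \ket{a_j}\bra{j}$ acting on $A$; applied to $\rho$, this channel produces $\sigma := \sum_j p_j \ketbra{j}{j}_{A'} \otimes \rho_j$, and a direct computation shows $\discm = I(A:B)_\rho - I(A':B)_\sigma$. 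Saturation of data processing for mutual information then forces sufficiency of $\mathcal{M}$ on $\rho$, and the Petz recovery of $\mathcal{M}$ sends $\sigma$ back to $\Lambda(\rho)$, yielding $\rho = \Lambda(\rho)$ in one stroke. As a more elementary but computation-heavy fallback I would attempt a perturbation argument: if $\rho_A$ were not diagonal in $\{\ket{a_j}\}$, an infinitesimal unitary rotation of the measurement basis should strictly decrease $S(B|\proj)$, contradicting optimality of $\proj$. I expect the Petz route to produce the cleaner final write-up.
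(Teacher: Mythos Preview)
The paper does not prove this theorem at all: it is stated as a known result due to Ollivier and Zurek~\cite{ollivier01a} and is then used as a black box in the proof of Thm.~\ref{thm:discequiv}. So there is no ``paper's own proof'' to compare against; any correct argument you supply would simply be filling in a citation.

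That said, your sketch has a genuine gap in the ``only if'' direction. Your claim that ``the Petz recovery of $\mathcal{M}$ sends $\sigma$ back to $\Lambda(\rho)$'' is not correct as stated. With reference state $\rho_A\otimes\rho_B$, the Petz recovery of $\mathcal{M}\otimes\mathrm{id}$ applied to $\sigma=\sum_j p_j\ketbra{j}{j}\otimes\rho_j$ yields
\[
\sum_j \rho_A^{1/2}\,\ketbra{a_j}{a_j}\,\rho_A^{1/2}\otimes\rho_j,
\]
which equals $\Lambda(\rho)=\sum_j p_j\ketbra{a_j}{a_j}\otimes\rho_j$ \emph{only if} each $\ket{a_j}$ is already an eigenvector of $\rho_A$, i.e.\ only if $H(\ve{p})=S(A)$. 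But that is precisely the missing equality you were trying to establish, so this step is circular. Petz's theorem does give you $\rho=\sum_j \rho_A^{1/2}\ketbra{a_j}{a_j}\rho_A^{1/2}\otimes\rho_j$, a decomposition with rank-one but generally \emph{non-orthogonal} $A$-components, from which Eqn.~(\ref{eqn:disc_exp_proof_eq1}) does not follow directly for the given $\proj$ (or obviously for any other). Your fallback perturbation argument---varying the measurement basis to show that optimality forces $\rho_A$ to be diagonal in $\{\ket{a_j}\}$---is in fact the standard way this is closed, and is closer in spirit to Ollivier and Zurek's original argument; I would make that the primary route rather than the backup.
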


We now prove the main result of this section. The first part of the proof involves a new characterization of the set of zero discord quantum states $\rho$ in terms of the basis elements $\sigma^A_i$ from the Fano form of $\rho$. Key to this characterization is the absence of non-diagonal $\sigma^A_i$ in the expansion of $\rho$. In the proofs below, we assume the basis elements $\sigma_i^A$ for $\spa{A}$ come from the set $\set{I, U_{pq}, V_{pq},W_{r}}_{p,q,r}^A$ from Sec.~\ref{scn:prelim} (analogously for $\spa{B}$).

\begin{theorem}\label{thm:discequiv}
    Let $\rho\in\dens$. Then $\discm=0$ if and only if there exists a local unitary $V^A$ such that
    \[
        \trace\left(\left(V^A\otimes I^B\right)\rho\left({V^A}^\dagger\otimes I^B\right) \left(\sigma_i^A\otimes \sigma_j^B\right)\right)=0
    \]
    for all $\sigma_i^A\in\set{U_{pq},V_{pq}}^A$ and all $\sigma^B_j\in\set{I, U_{pq}, V_{pq},W_{r}}^B$. The same characterization holds for $D(\rho)=0$.
\end{theorem}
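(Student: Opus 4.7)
The plan has two logically parallel parts, both flowing from a common ``block-diagonal'' normal form for $\rho$ in a suitably chosen basis on $\spa{A}$.

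For the discord equivalence, I would invoke Thm.~\ref{thm:olzu}. In the forward direction, $\discm=0$ yields rank-$1$ orthogonal projectors $\proj$ with $\rho=\sum_j (\Pi_j^A\otimes I^B)\rho(\Pi_j^A\otimes I^B)$. Let $V^A$ be the unitary that rotates the eigenvectors of the $\Pi_j^A$ to the computational basis $\set{\ket{k}}$, and set $\rho':=(V^A\otimes I^B)\rho(V^{A\adjoint}\otimes I^B)$. Then $\rho'=\sum_k \ketbra{k}{k}\otimes \rho_k$ for some positive operators $\rho_k$ on $\spa{B}$, so the trace in the theorem statement reduces to $\sum_k \bra{k}\sigma_i^A\ket{k}\trace(\sigma_j^B\rho_k)$. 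For $\sigma_i^A\in\set{U_{pq},V_{pq}}^A$ this vanishes since both $U_{pq}$ and $V_{pq}$ have zero diagonal in $\set{\ket{k}}$.

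For the converse I would expand $\rho'$ in the Fano form using the tensor basis elements from $\set{I,U_{pq},V_{pq},W_r}^A\otimes\set{I,U_{pq},V_{pq},W_r}^B$. By hypothesis every coefficient of $\sigma_i^A\otimes\sigma_j^B$ with $\sigma_i^A\in\set{U_{pq},V_{pq}}^A$ vanishes, so only terms with $\sigma_i^A\in\set{I,W_r}^A$ survive. Since $I$ and each $W_r$ is diagonal in $\set{\ket{k}}$, this forces $\rho'=\sum_k \ketbra{k}{k}\otimes M_k$ for some operators $M_k$ on $\spa{B}$. Rotating back by $V^{A\adjoint}$ exhibits $\rho$ in the form of Eqn.~(\ref{eqn:disc_exp_proof_eq1}) with the rank-$1$ projectors $V^{A\adjoint}\ketbra{k}{k}V^A$, and Thm.~\ref{thm:olzu} yields $\discm=0$.

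The second half shows $D(\rho)=0$ is governed by exactly the same block-diagonal condition. If $D(\rho)=0$, pick $\UA\in\rous$ with $(\UA\otimes I_B)\rho(\UA^\adjoint\otimes I_B)=\rho$, and diagonalize $\UA=V^{A\adjoint}\operatorname{diag}(\ve{v})V^A$ with $\ve{v}$ some permutation of the $M$-th roots of unity. Writing $\rho'$ in blocks $\rho'_{jk}$ on $\spa{A}$ in the basis $\set{V^{A\adjoint}\ket{k}}$, the invariance equation becomes $v_j\bar v_k\rho'_{jk}=\rho'_{jk}$; distinctness of the $v_j$ forces $\rho'_{jk}=0$ whenever $j\neq k$, reproducing the same block-diagonal normal form, hence the stated trace condition. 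Conversely, once the trace condition holds, the derivation above gives $\rho'=\sum_k\ketbra{k}{k}\otimes M_k$, which is invariant under every diagonal unitary on $\spa{A}$; choosing $\UA:=V^{A\adjoint}\operatorname{diag}(\ve{v})V^A\in\rous$ with $v_k=e^{2\pi ki/M}$ gives $D(\rho,\UA)=0$.

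The most delicate step is the ``only-if'' direction of the discord part: translating the assumed vanishing of the trace coefficients into an actual block-diagonal decomposition of $\rho'$. This reduces to the standard observation that $\set{I,W_r}$ spans the diagonal Hermitian matrices in $\set{\ket{k}}$; after that, everything else is routine bookkeeping with the Fano expansion and the orthogonality relation $\trace(\sigma_i\sigma_j)=2\delta_{ij}$.
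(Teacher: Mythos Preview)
Your proposal is correct and follows essentially the same route as the paper: both reduce the statement to the equivalence between the vanishing-trace condition and a block-diagonal form $\rho'=\sum_k\ketbra{k}{k}\otimes M_k$ in a suitable basis on $\spa{A}$, invoking Thm.~\ref{thm:olzu} for the discord side and RU-invariance for the $D(\rho)$ side. The only notable difference is in the step $D(\rho)=0\Rightarrow$ block-diagonal: the paper computes $\Phi(U_{pq})$ and $\Phi(V_{pq})$ explicitly (Eqns.~(\ref{eqn:u1})--(\ref{eqn:u2})) and solves the resulting $2\times 2$ linear system to force the off-diagonal Fano coefficients to vanish, whereas your block argument $v_j\bar v_k\rho'_{jk}=\rho'_{jk}$ reaches the same conclusion more directly; the paper's formulation, however, is set up to generalize immediately to the degenerate-eigenvalue case of Thm.~\ref{thm:gendiscequiv}.
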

\begin{proof}
We prove the equivalent statement that $\delta(\rho)=0$ if and only if there exists an orthonormal basis $\set{\ket{k}}$ for $\A$ such that, for basis elements $\sigma_i^A$ constructed with respect to $\set{\ket{k}}$, we have $\trace(\rho (\sigma_i^A\otimes \sigma_j^B))=0$ for all $\sigma_i^A\in\set{U_{pq},V_{pq}}$ (and similarly for $D(\rho)=0$).

    Suppose $\discm=0$. Then by Thm.~\ref{thm:olzu}, there exists a complete set of rank 1 projectors $\proj$ such that Eqn.~(\ref{eqn:disc_exp_proof_eq1}) holds. Let $\set{\ket{k}}$ be the basis onto which $\proj$ projects, and define $\Phi(C):=\sum_j\Pi_j^AC\Pi_j^A$. By constructing the basis elements $\sigma_i^A$ in Eqn.~(\ref{eqn:fano}) using $\set{\ket{k}}$, we thus have
    \begin{eqnarray}\label{eqn:disc_exp_proof_eq2}
        \rho &=& \frac{1}{MN}\left[ I^A\otimes I^B + {I^A}\otimes\ve{r}^B\cdot\ve{\sigma}^B+\hspace{10mm}\right.\\&&
                    \left.\sum_{i=1}^{M^2-1}\Phi(\sigma^A_i)\otimes\left(r^A_iI^B+\sum_{j=1}^{N^2-1}T_{ij}\sigma^B_j\right) \right] .\nonumber
    \end{eqnarray}
    Now, for all $\sigma_i^A\in\set{W_r}$, we clearly have $\Phi(\sigma_i^A)=\sigma_i^A$. For $\sigma_i^A\in\set{U_{pq},V_{pq}}$, however, $\Phi(\sigma_i^A)=0$. Thus, in order for Eqn.~(\ref{eqn:disc_exp_proof_eq1}) to hold, we must have $r_i^A=T_{ij}=0$ for all basis elements $\sigma_i^A\in\set{U_{pq},V_{pq}}$, which by definition means $\trace(\rho(\sigma_i^A\otimes \sigma_j^B))=0$ for all $\sigma_i^A\in\set{U_{pq},V_{pq}}^A$, as desired. To show that this implies $D(\rho)=0$, construct $U^A\in\rous$ as diagonal in basis $\set{\ket{k}}$ and define $\Phi(C):=U^AC {U^A}^\dagger$. Then since in Eqn.~(\ref{eqn:disc_exp_proof_eq2}), we have $\Phi(\sigma_i^A)=\sigma_i^A$ for any $\sigma_i^A\in\set{I,W_r}$, the claim follows.

    To show the converse, assume $D(\rho,U^A)=0$ for some $U^A\in\rous$. Then, construct the basis elements $\sigma^A_i$ with respect to a diagonalizing basis $\set{\ket{k}}$ for $U^A$ and define $\Phi(C):=U^AC {U^A}^\dagger$. It follows that for any $p$ and $q$,
    \begin{eqnarray}
        \Phi(U_{pq})&=&e^{i(\theta_p-\theta_q)}\ketbra{p}{q}+e^{-i(\theta_p-\theta_q)}\ketbra{q}{p},\label{eqn:u1}\\
        \Phi(V_{pq})&=&-ie^{i(\theta_p-\theta_q)}\ketbra{p}{q}+ie^{-i(\theta_p-\theta_q)}\ketbra{q}{p}\label{eqn:u2}.
    \end{eqnarray}

    \noindent Consider now an arbitrary term $(c_{u}\sigma^A_u+c_v\sigma^A_v)\otimes\sigma^B_j$ from the Fano form of $\rho$ where $\sigma^A_u=U_{pq}$ and $\sigma^B_v=V_{pq}$ for some choice of $p$ and $q$. Since Eqns.~(\ref{eqn:u1}) and~(\ref{eqn:u2}) imply that $U^A$ can only map $U_{pq}$ to $V_{pq}$ and vice versa, it follows that in order for $D(\rho,U^A)=0$ to hold, we must have
$        \Phi(c_{u}\sigma^A_u+c_v\sigma^A_v)=c_{u}\sigma^A_u+c_v\sigma^A_v.
$
    This leads to the system of equations
    \begin{eqnarray*}
        c_u-ic_v&=&e^{i(\theta_p-\theta_q)}(c_u-ic_v)\\
        c_u+ic_v&=&e^{-i(\theta_p-\theta_q)}(c_u+ic_v).
    \end{eqnarray*}
    We conclude that if either $c_u\neq0$ or $c_v\neq0$, it must be that $\theta_p=\theta_q$ in order for $D(\rho)=0$ to hold. However, since all eigenvalues of $U^A$ are distinct by definition, this is impossible. Thus, $\trace(\rho (\sigma_i^A\otimes \sigma_j^B))=0$ for all $\sigma_i^A\in\set{U_{pq},V_{pq}}$, as desired. To see that this implies $\discm=0$, simply now choose $\proj$ as the projection onto $\set{\ket{k}}$. Then, defining $\Phi(C):=\sum_j\Pi_j^AC\Pi_j^A$ and applying the same arguments from the forward direction to Eqn.~(\ref{eqn:disc_exp_proof_eq2}), we conclude that $\rho$ is invariant under $\proj$. By Thm.~\ref{thm:olzu}, we have $\discm=0$, completing the proof.
\end{proof}

Theorem~\ref{thm:discequiv} shows that $D(\rho)$ defined in Eqn.~(\ref{eqn:measure_def}) is zero precisely for the set of states classically correlated in $\spa{A}$. In other words, unlike the Fu distance~\cite{DG08}, $D(\rho)$ is indeed a \emph{faithful} non-classicality measure. The proof of Thm.~\ref{thm:discequiv} does, however, have a curiosity --- the key property the proof relies on is that all $U^A\in\rous$ have non-degenerate spectra. Interestingly, this is the mixed-state analogue of the pure-state result of Ref.~\cite{MAGGDI11}, where it was shown that a non-degenerate spectrum suffices to conclude $D(\ketbra{\psi}{\psi})$ is a faithful {entanglement monotone} for pure states $\ket{\psi}$. Specifically, Ref.~\cite{MAGGDI11} shows that if in Eqn.~(\ref{eqn:measure_def}) we minimize over $U^A$ with eigenvalues of multiplicity at most $k$ (with at least one eigenvalue of multiplicity $k$), then $D(\ketbra{\psi}{\psi})=0$ if and only if $\ket{\psi}$ has Schmidt rank at most $k$. Could there be an analogue of this more general result in the mixed-state setting of non-classicality? It turns out the answer is yes.

Let $\ve{v}\in\nats^M$ such that $\sum_{j=1}^M v_jj=M$. Then, consider an arbitrary (i.e. not necessarily RU) unitary $\UAv$ which has precisely $v_j$ distinct eigenvalues with multiplicity $j$. For example, $\UAv\in\rous$ has $\ve{v}=(M,0,\ldots,0)$ since it has $M$ distinct eigenvalues of multiplicity $1$. Similarly, if $\ve{v}=(0,0,\ldots,1)$, then $\UAv$ is just the identity (up to phase), and if $\ve{v}=(M-4,2,\ldots,0)$ then $\UAv$ has $M-4$ distinct eigenvalues of multiplicity $1$, and two distinct eigenvalues with multiplicity $2$ each. Now, corresponding to any $\UAv$ is a complete projective measurement $\projv$ which consists precisely of $v_j$ projectors of rank $j$. The correspondence is simple: Let $\lambda$ be an eigenvalue of $\UAv$ with multiplicity $j$, i.e. the projector $\Pi_\lambda$ onto its eigenspace has rank $j$. Then $\Pi_\lambda\in \projv$. It is easy to see that similarly, corresponding to any $\projv$ is a $\UAv$ (assuming we are not concerned with the precise eigenvalues of $\UAv$, as is this case here). We can now state the following.

\begin{theorem}\label{thm:gendiscequiv}
    Let $\rho\in\dens$ and $\ve{v}\in\nats^M$ such that $\sum_{j=1}^M v_jj=M$. Then, there exists a complete projective measurement $\projv$ such that
    \begin{equation}
        \rho = \sum_j \Pi_j^A\otimes I^B\rho \Pi_j^A\otimes I^B\label{eqn:invar}
    \end{equation}
    if and only if there exists a $\UAv\in \mathcal{U}(\spa{A})$ with $D(\rho,\UAv)=0$.
\end{theorem}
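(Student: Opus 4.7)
The plan is to reduce the theorem to a direct spectral computation, exploiting the bijection between complete projective measurements $\projv$ and unitaries $\UAv$ with eigenvalue multiplicity vector $\ve{v}$ that the excerpt highlights just before the statement.

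For the forward direction, suppose Eqn.~(\ref{eqn:invar}) holds for some $\projv=\{\Pi_k^A\}$. I would explicitly \emph{construct} a witness $\UAv$. Choose any assignment of $\sum_j v_j$ distinct unit-modulus complex numbers $\lambda_k$, one per projector $\Pi_k^A$, such that each $\lambda_k$ is repeated with multiplicity equal to $\operatorname{rank}(\Pi_k^A)$; then set $\UAv := \sum_k \lambda_k \Pi_k^A$. This has exactly the required multiplicity profile $\ve{v}$. Plugging into Eqn.~(\ref{eqn:measure_def2}) via
\[
\left(\UAv\otimes I^B\right)\rho\left({\UAv}^\dagger\otimes I^B\right)
= \sum_{k,l}\lambda_k\bar\lambda_l\,\left(\Pi_k^A\otimes I^B\right)\rho\left(\Pi_l^A\otimes I^B\right),
\]
I would first insert the hypothesis~(\ref{eqn:invar}) on both sides of $\rho$, so that the orthogonality $\Pi_k^A\Pi_m^A=\delta_{km}\Pi_k^A$ collapses the double sum to $\sum_k|\lambda_k|^2(\Pi_k^A\otimes I^B)\rho(\Pi_k^A\otimes I^B)=\rho$. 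Hence $D(\rho,\UAv)=0$.

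For the converse, suppose $D(\rho,\UAv)=0$ for some $\UAv$ with spectral decomposition $\UAv=\sum_k\lambda_k\Pi_k^A$, where the $\lambda_k$ are the \emph{distinct} eigenvalues and the $\Pi_k^A$ are the spectral projectors; by definition of the multiplicity vector $\ve{v}$, the collection $\projv:=\{\Pi_k^A\}$ is a complete projective measurement with $v_j$ projectors of rank $j$. Decompose $\rho=\sum_{k,l}\rho_{kl}$ with $\rho_{kl}:=(\Pi_k^A\otimes I^B)\rho(\Pi_l^A\otimes I^B)$. Since the subspaces $(\Pi_k^A\otimes I^B)(\A\otimes\B)$ are mutually orthogonal, the $\rho_{kl}$ live in orthogonal ``blocks'' of operator space and are determined independently by $\rho$. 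Conjugating by $\UAv\otimes I^B$ multiplies the $(k,l)$ block by $\lambda_k\bar\lambda_l$, so the equation $\UAv\otimes I^B\,\rho\,{\UAv}^\dagger\otimes I^B=\rho$ forces $(\lambda_k\bar\lambda_l-1)\rho_{kl}=0$ for all $k,l$. Because $|\lambda_k|=|\lambda_l|=1$ and $\lambda_k\neq\lambda_l$ for $k\neq l$, the factor $\lambda_k\bar\lambda_l\neq 1$ off the diagonal, whence $\rho_{kl}=0$ for $k\neq l$. This is precisely Eqn.~(\ref{eqn:invar}).

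The argument is largely mechanical, so I do not expect a serious obstacle; if anything, the subtlest point is verifying that the bijection described in the paragraph preceding the statement lets one move freely between $\UAv$ and $\projv$ without worrying about the particular eigenvalues chosen. In the forward direction this is handled by the explicit construction above, and in the backward direction by simply reading off the spectral projectors of $\UAv$. The role of the condition $\sum_j v_jj=M$ is only to guarantee that the projectors sum to $I^A$ and thus form a genuine complete measurement, which is automatic in both directions.
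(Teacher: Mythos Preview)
Your argument is correct. Both directions are clean applications of the spectral decomposition of $\UAv$ together with the block decomposition $\rho=\sum_{k,l}(\Pi_k^A\otimes I)\rho(\Pi_l^A\otimes I)$; the key observation that $\lambda_k\bar\lambda_l\neq 1$ for $k\neq l$ whenever the $\lambda_k$ are distinct unimodular numbers is exactly what forces the off-diagonal blocks to vanish.

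The paper's route is genuinely different. Rather than working with block decompositions, it reuses the machinery of Thm.~\ref{thm:discequiv}: one expands $\rho$ in the Fano form with respect to the Hermitian basis $\{U_{pq},V_{pq},W_r\}$ built from a diagonalizing basis of $\UAv$, and then tracks explicitly how conjugation by $\UAv$ and the measurement map $\Phi(C)=\sum_j\Pi_j^AC\Pi_j^A$ act on each $U_{pq},V_{pq}$ via Eqns.~(\ref{eqn:u1})--(\ref{eqn:u2}). The payoff of the paper's approach is that it simultaneously yields the correlation-matrix characterization of zero-discord states stated in Thm.~\ref{thm:discequiv}, and makes transparent which coefficients $r_i^A,T_{ij}$ are allowed to be nonzero when a rank-$k$ projector is present. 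Your approach, by contrast, is shorter, basis-free, and isolates precisely the algebraic content of the equivalence without the Fano-form scaffolding; it would work verbatim in any dimension and for any choice of eigenvalues, not just roots of unity.
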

\begin{proof}
    The proof follows that of Thm.~\ref{thm:discequiv}, so we outline the differences. Here, $\UAv$ and $\projv$ will be related through the correspondence outlined above, and the basis elements $\sigma_i^A$ are constructed with respect to a diagonalizing basis $\set{\ket{k}}$ for $\UAv$ (which by definition also diagonalizes each $\Pi_j^A\in\projv$). For simplicity, we discuss the case of $\ve{v}=(M-2,1,0,\ldots,0)$; all other cases proceed analogously.

    Going in the forward direction, suppose $\Pi^A_j\in\projv$ projects onto $\spa{S}_{pq}:=\operatorname{span}(\ket{p},\ket{q})$. Then, in Eqn.~(\ref{eqn:disc_exp_proof_eq2}), $\Phi(\sigma_i^A)=\sigma_i^A$ for $\sigma_i^A=U_{pq}$ and $\sigma_i^A=V_{pq}$. In other words, now we can have $r_i^A\neq 0$ and $T_{ij}\neq 0$ (however, note we still have $r_{m\neq i}^A=0$ and $T_{m\neq i,j}=0$). Since $\UAv$ has a degenerate eigenvalue on $S_{pq}$, however, we have by Eqns.~(\ref{eqn:u1}) and~(\ref{eqn:u2}) that $\UAv$ acts invariantly on $\sigma_i^A$ as well (since $\theta_p=\theta_q$). The converse is similar; namely, suppose $\UAv$ has a degenerate eigenvalue on $\mathcal{S}_{pq}$. Then the projector onto the corresponding two-dimensional eigenspace $\Pi^A_j\in\projv$ is $\Pi^A_j=\ketbra{p}{p}+\ketbra{q}{q}$. It thus follows by the same argument as above that both $\UAv$ and $\Pi^A_j$ act invariantly on $U_{pq}$ and $V_{pq}$.
\end{proof}

From this general theorem, we can re-derive as a simple corollary the pure state result of Ref.~\cite{MAGGDI11} mentioned earlier, which we rephrase in our terminology as follows.

\begin{cor}
    Let $\ket{\psi}=\sum_{i=1}^r\alpha_i\ket{\psi^A_i}\ket{\psi^B_i}$ be the Schmidt decomposition of $\ket{\psi}\in\spa{A}\otimes\spa{B}$. Then, there exists $\UAv\in\mathcal{U}(\spa{A})$ with $v_k\geq 1$ (i.e. $\UAv$ has an eigenvalue of multiplicity $k$), $v_{k'>k}=0$ (all eigenvalues of $\UAv$ have multiplicity at most $k$), and $D(\ketbra{\psi}{\psi},\UAv)=0$ if and only if $k\geq r$.
\end{cor}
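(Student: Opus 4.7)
The plan is to reduce the statement to Theorem~\ref{thm:gendiscequiv}, which asserts that $D(\ketbra{\psi}{\psi},\UAv)=0$ holds for some $\UAv$ with multiplicity pattern $\ve{v}$ if and only if there is a complete projective measurement $\projv$ (with $v_j$ rank-$j$ projectors) leaving $\ketbra{\psi}{\psi}$ invariant as in Eqn.~(\ref{eqn:invar}). Under the $\UAv \leftrightarrow \projv$ correspondence described just before that theorem, the largest eigenvalue multiplicity of $\UAv$ equals the largest rank among the $\Pi_j^A$. So the corollary reduces to showing that the minimum, over all projective measurements satisfying Eqn.~(\ref{eqn:invar}), of $\max_j \operatorname{rank}(\Pi_j^A)$ is exactly the Schmidt rank $r$.

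For the necessity direction ($k\geq r$), I would first argue that exactly one projector acts nontrivially on $\ket{\psi}$. Each summand $\rho_j:=(\Pi_j^A\otimes I^B)\ketbra{\psi}{\psi}(\Pi_j^A\otimes I^B)$ is positive semidefinite, and their sum is the rank-one operator $\ketbra{\psi}{\psi}$; hence every $\rho_j$ is proportional to $\ketbra{\psi}{\psi}$. Writing $(\Pi_j^A\otimes I^B)\ket{\psi}=c_j\ket{\psi}$ and applying the projector once more forces $c_j^2=c_j$, so $c_j\in\{0,1\}$. Because $\sum_j \Pi_j^A=I^A$, a unique index $j^\ast$ satisfies $(\Pi_{j^\ast}^A\otimes I^B)\ket{\psi}=\ket{\psi}$. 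Expanding this equation in the Schmidt decomposition and using linear independence of the $\ket{\psi^B_i}$ yields $\Pi_{j^\ast}^A\ket{\psi^A_i}=\ket{\psi^A_i}$ for each $i=1,\ldots,r$. The range of $\Pi_{j^\ast}^A$ therefore contains $\operatorname{span}\{\ket{\psi^A_i}\}_{i=1}^r$, so $\operatorname{rank}(\Pi_{j^\ast}^A)\geq r$ and hence $k\geq r$.

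For sufficiency, given $k$ with $r\leq k\leq M$, I would exhibit a witness directly. Extend the Schmidt basis $\{\ket{\psi^A_i}\}_{i=1}^r$ to an orthonormal basis $\{\ket{\psi^A_i}\}_{i=1}^M$ of $\spa{A}$, define the rank-$k$ projector $\Pi_{j^\ast}^A:=\sum_{i=1}^k \ketbra{\psi^A_i}{\psi^A_i}$, and take the rank-$1$ projectors $\ketbra{\psi^A_i}{\psi^A_i}$ for $i=k+1,\ldots,M$ to complete the measurement. A direct check using the Schmidt decomposition shows Eqn.~(\ref{eqn:invar}) holds and that the largest projector rank is exactly $k$, so Theorem~\ref{thm:gendiscequiv} delivers a $\UAv$ with the required multiplicity pattern.

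The only step demanding any care is the dichotomy $c_j\in\{0,1\}$ at the start of the necessity argument, which follows from combining the rank-one structure of $\ketbra{\psi}{\psi}$ with the idempotence of $\Pi_j^A\otimes I^B$. Everything else is bookkeeping with the Schmidt decomposition, so no substantial obstacle is anticipated.
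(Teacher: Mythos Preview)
Your proposal is correct and follows the same route as the paper: both directions are reduced via Theorem~\ref{thm:gendiscequiv} to the existence of an invariant projective measurement $\projv$, and the sufficiency witness (a rank-$k$ projector containing the Schmidt support, completed by rank-one projectors) is essentially the paper's construction made explicit. Where the paper simply asserts that for $k<r$ ``clearly no such $\projv^k$ \ldots\ exists'', you supply the underlying argument---using that a sum of positive operators equal to a rank-one operator forces each summand to be a scalar multiple of it, together with idempotence---so your necessity direction is more fully justified than the paper's.
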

\begin{proof}
    Suppose $k\geq r$. Then, by defining $\projv^k$ such that $v_k\geq 1$ and $v_{k'>k}=0$, one can choose a $\projv^k$ such that Eqn.~(\ref{eqn:invar}) holds for $\rho=\ketbra{\psi}{\psi}$ (i.e. simply project onto $\operatorname{span}(\set{\ket{\psi^A_i}})$). By Thm.~\ref{thm:gendiscequiv}, this implies there exists a $\UAv$ with $v_k\geq 1$ and $v_{k'>k}=0$ achieving $D(\ketbra{\psi}{\psi},\UA)=0$. Conversely, if $k<r$, then clearly no such $\projv^k$ such that Eqn.~(\ref{eqn:invar}) holds exists. By Thm.~\ref{thm:gendiscequiv}, this implies that no $\UA$ with an eigenvalue of multiplicity at most $k$ and $D(\ketbra{\psi}{\psi},\UA)=0$ exists, as desired.
\end{proof}

We close this section with two final comments. First, given Thm.~\ref{thm:discequiv}, one might ask whether a stronger relationship between $D(\rho)$ and $\discm$ holds. For example, could it be that $D(\rho)\geq\discm$ for all $\rho$? This simplest type of relationship is ruled out easily via Thm.~\ref{thm:werner} and Eqn.~(\ref{eqn:discWerner}), since for $d=2$ and $p=2/3$, $D(\rho)=1/9 \geq \discm \approx 0.01614$, while for $d=50$ and $p=2/3$, $D(\rho)\approx 0.00627 \leq \discm\approx 0.07111$.

Second, note that Thm.~\ref{thm:gendiscequiv} reduces to Thm.~\ref{thm:discequiv} if we choose $\ve{v}=(M,0,\ldots,0)$. This suggests defining a \emph{generalized quantum discord}, denoted $\delta_{\ve{v}}(\rho)$, which is analogous to $\discm$, except that now we use the class of measurements $\projv$ in Eqn.~(\ref{eqn:discord_def}). For example, $\delta_{(M,0,\ldots,0)}(\rho)=\discm$. We hope the study of $\delta_{\ve{v}}(\rho)$ would prove fruitful in its own right.

%///////////////////////////////////////////////////////////////////////////////////////////////////////////////////
\section{Maximally Non-Classical Separable States}\label{scn:maxnc}
%///////////////////////////////////////////////////////////////////////////////////////////////////////////////////
In this section, we characterize the set of maximally non-classical, yet separable, $(2\times N)$-dimensional states of rank at most $2$, as quantified by $D(\rho)$. To do so, consider separable state
\begin{equation}\label{eqn:sepstate}
    \rho = \sum_{i=1}^{n} p_i \ketbra{a_i}{a_i}\otimes\ketbra{b_i}{b_i},
\end{equation}
where $\sum_i p_i = 1$, $\ket{a_i}\in\complex^2$, $\ket{b_i}\in\complex^N$. Via simple algebraic manipulation, one then finds that $D(\rho,U_A)$ for any given $U_A\in\mathcal{U}(\spa{A})$ is given by
\begin{equation}\label{eqn:sepstateD} \sqrt{\sum_{i=1}^n\sum_{j=1}^np_ip_j\abs{\braket{b_i}{b_j}}^2(\abs{\braket{a_i}{a_j}}^2-\abs{\bra{a_i}U_A\ket{a_j}}^2)}.
\end{equation}
We begin by proving a simple but useful upper bound on $D(\rho)$ which depends solely on $n$.

\begin{lemma}\label{l:upperbound}
    Let $\rho$ be a separable state as given by Eqn.~(\ref{eqn:sepstate}). Then $D(\rho)\leq 1-\max_i p_i \leq1-\frac{1}{n}$.
\end{lemma}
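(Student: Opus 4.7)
The plan is to exploit the freedom in choosing $U_A\in\rous$ so that the heaviest term in the ensemble contributes nothing to Eqn.~(\ref{eqn:sepstateD}), and then bound the surviving terms individually.

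First I would set $k:=\arg\max_i p_i$. Since in the two-qubit case every RU unitary has the form $U_A=\ketbra{c}{c}-\ketbra{d}{d}$ (Eqn.~(\ref{eqn:2qU})) for some orthonormal basis $\set{\ket{c},\ket{d}}$ of $\complex^2$, I am free to pick $\ket{c}:=\ket{a_k}$ (and $\ket{d}$ its orthogonal complement). With this choice $U_A\ket{a_k}=\ket{a_k}$ and $\bra{a_k}U_A=\bra{a_k}$, so
\[
\abs{\bra{a_i}U_A\ket{a_j}}^2 = \abs{\braket{a_i}{a_j}}^2 \quad\text{whenever } i=k \text{ or } j=k,
\]
and hence every such $(i,j)$ term in Eqn.~(\ref{eqn:sepstateD}) vanishes.

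Next I would bound the surviving double sum termwise. Since $\abs{\braket{b_i}{b_j}}^2\leq 1$, and
$\abs{\braket{a_i}{a_j}}^2-\abs{\bra{a_i}U_A\ket{a_j}}^2\leq \abs{\braket{a_i}{a_j}}^2\leq 1$, each summand (whether the inner factor is positive or negative) satisfies
\[
p_ip_j\abs{\braket{b_i}{b_j}}^2\bigl(\abs{\braket{a_i}{a_j}}^2-\abs{\bra{a_i}U_A\ket{a_j}}^2\bigr)\leq p_ip_j.
\]
Summing over $i,j\neq k$ then yields $D(\rho,U_A)^2\leq\sum_{i\neq k}\sum_{j\neq k}p_ip_j=(1-p_k)^2$, whence $D(\rho)\leq D(\rho,U_A)\leq 1-\max_i p_i$. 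The second inequality $1-\max_i p_i\leq 1-1/n$ is then immediate from $\max_i p_i\geq (1/n)\sum_i p_i=1/n$.

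I do not anticipate any real obstacle. The only substantive move is recognizing that aligning the $+1$-eigenvector of $U_A$ with the dominant $\ket{a_k}$ suffices to zero out an entire row and column of the double sum in Eqn.~(\ref{eqn:sepstateD}); once this is done, the crudest possible pointwise bound on what remains is already sharp enough to conclude.
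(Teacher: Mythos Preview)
Your proof is correct and follows exactly the same approach as the paper: align an eigenvector of $U_A$ with the $\ket{a_k}$ of largest weight so that the row and column indexed by $k$ in Eqn.~(\ref{eqn:sepstateD}) vanish, then bound each surviving term by $p_ip_j$ and sum to $(1-p_k)^2$. The only cosmetic difference is that you spell out the termwise inequality a bit more explicitly than the paper does.
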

\begin{proof}
    Assume WLOG that $\max_i p_i = p_1$. Then $1/n \leq p_1\leq 1$. Choose any $U_A\in\mathcal{U}(\spa{A})$ such that $\ket{a_1}$ is an eigenvector of $U_A$. Then any term in the double sum of Eqn.~(\ref{eqn:sepstateD}) in which $\ket{a_1}$ appears vanishes. We can hence loosely upper bound the value of Eqn.~(\ref{eqn:sepstateD}) by
    $
        \sqrt{(\sum_{i\neq1,j\neq 1}p_ip_j)}=1-p_1.
    $
    Recalling that $p_1\geq1/n$ yields the desired bound.
\end{proof}

When $n=2$, i.e. when $\rho$ is rank at most two, observe from Lem.~\ref{l:upperbound} that $D(\rho)\leq 1/2$, and this is attainable only when $p_1=p_2=1/2$. We now show that this bound can indeed be saturated, and characterize all states with $n=2$ that do so.

\begin{lemma}
    Let $\rho$ be a separable state as in Eqn.~(\ref{eqn:sepstate}) with $p_1=p_2=1/2$. Then $D(\rho)=1/2$ if and only if $\abs{\braket{a_1}{a_2}}=1/\sqrt{2}$ and $\braket{b_1}{b_2}=0$.
\end{lemma}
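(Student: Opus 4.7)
The plan is to reduce the problem, via the Bloch-sphere picture on $\A$, to a single eigenvalue computation. By Eqn.~(\ref{eqn:2qU}), every $U_A\in\rous$ has the form $U_A = 2\ketbra{c}{c} - I$, and its conjugation action on pure states on $\complex^2$ is the reflection $R_{\ve{c}} := 2\ve{c}\ve{c}^T - I$ on $\reals^3$, where $\ve{c}$ is the Bloch vector of $\ket{c}$. Using $|\braket{a_i}{a_j}|^2 = (1+\ve{a}_i\cdot\ve{a}_j)/2$ and $|\bra{a_i}U_A\ket{a_j}|^2 = (1+\ve{a}_i\cdot R_{\ve{c}}\,\ve{a}_j)/2$, the difference factor inside Eqn.~(\ref{eqn:sepstateD}) collapses to $\ve{a}_i^T(I-\ve{c}\ve{c}^T)\ve{a}_j$. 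Setting $s := \ve{a}_1\cdot\ve{a}_2$ and $t := |\braket{b_1}{b_2}|^2$ and separating the diagonal and off-diagonal contributions in the double sum, one obtains
\[
D(\rho,U_A)^2 \;=\; \tfrac{1}{4}\bigl[\,2 + 2ts - \ve{c}^T M \ve{c}\,\bigr],\qquad M := \ve{a}_1\ve{a}_1^T + \ve{a}_2\ve{a}_2^T + t\bigl(\ve{a}_1\ve{a}_2^T + \ve{a}_2\ve{a}_1^T\bigr).
\]
Minimizing over $U_A\in\rous$ therefore amounts to maximizing the quadratic form $\ve{c}^T M \ve{c}$ over unit $\ve{c}\in\reals^3$, yielding $D(\rho)^2 = \tfrac{1}{4}[\,2 + 2ts - \lambda_{\max}(M)\,]$. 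Consequently $D(\rho) = 1/2$ is equivalent to the single scalar identity $\lambda_{\max}(M) = 1 + 2ts$.

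For the ``only if'' direction, I would first substitute the unit vector $\ve{c} = \ve{a}_1$ to get $\ve{a}_1^T M \ve{a}_1 = 1 + s^2 + 2ts$ as a lower bound on $\lambda_{\max}(M)$. Combined with $\lambda_{\max}(M) = 1+2ts$, this forces $s^2\leq 0$, hence $s = 0$, which is exactly $|\braket{a_1}{a_2}| = 1/\sqrt{2}$. Given $s = 0$, the Bloch vectors $\ve{a}_1$ and $\ve{a}_2$ are orthogonal unit vectors in $\reals^3$, so $\ve{c}' := (\ve{a}_1 + \ve{a}_2)/\sqrt{2}$ is itself a unit vector satisfying $\ve{c}'^T M \ve{c}' = 1 + t$; since $\lambda_{\max}(M)$ must now equal $1$, this forces $t = 0$, i.e., $\braket{b_1}{b_2}=0$. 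For the ``if'' direction, when $s = t = 0$ the matrix $M$ reduces to $\ve{a}_1\ve{a}_1^T + \ve{a}_2\ve{a}_2^T$, the orthogonal projector onto the rank-two subspace $\operatorname{span}(\ve{a}_1,\ve{a}_2)\subseteq\reals^3$, whence $\lambda_{\max}(M) = 1 = 1 + 2ts$, giving $D(\rho) = 1/2$ as required.

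The main technical step is the Bloch-vector reduction of Eqn.~(\ref{eqn:sepstateD}) into the compact identity above for $D(\rho,U_A)^2$; once that is in hand, both directions of the equivalence collapse to one-line evaluations at carefully chosen unit vectors $\ve{c}$, and no further casework is required. I do not anticipate any genuine obstacle beyond this fairly routine algebra, since the minimization over RU unitaries on $\complex^2$ ultimately reduces to nothing more than maximizing a $3\times 3$ quadratic form over the unit sphere in $\reals^3$.
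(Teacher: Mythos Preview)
Your argument is correct and, in fact, cleaner than the paper's own proof. The paper proceeds by fixing explicit angular parametrizations $\ket{a_1}=\ket{0}$, $\ket{a_2}=\cos\tfrac{\beta}{2}\ket{0}+\sin\tfrac{\beta}{2}\ket{1}$, $\ket{b_2}=\cos\tfrac{\alpha}{2}\ket{0}+\sin\tfrac{\alpha}{2}\ket{1}$, and $\ket{u}=\cos\tfrac{\theta}{2}\ket{0}+e^{i\phi}\sin\tfrac{\theta}{2}\ket{1}$, expands Eqn.~(\ref{eqn:sepstateD}) into a trigonometric polynomial $\Delta(\alpha,\beta,\theta,\phi)$, and then successively plugs in the test values $(\theta,\phi)=(0,0)$ and $\phi=0$ with $\theta$ free to force $\beta=\pi/2$ and $\alpha=\pi$; the ``if'' direction is then verified by direct substitution. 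Your Bloch-vector reduction replaces all of this trigonometry with the single identity $D(\rho,U_A)^2=\tfrac14[\,2+2ts-\ve{c}^TM\ve{c}\,]$, after which the minimization over $U_A\in\rous$ is recognized as the Rayleigh quotient $\lambda_{\max}(M)$. Your two test vectors $\ve{c}=\ve{a}_1$ and $\ve{c}'=(\ve{a}_1+\ve{a}_2)/\sqrt{2}$ correspond exactly to the paper's choices $(\theta,\phi)=(0,0)$ and $(\theta,\phi)=(\pi/4,0)$ (after setting $\beta=\pi/2$), so the underlying logic is the same; what your coordinate-free formulation buys is that no trigonometric simplification is needed, the ``if'' direction becomes the one-line observation that $M$ is a rank-two orthogonal projector, and the argument would generalize more readily to larger mixtures $n>2$.
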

\begin{proof}
    Since by Lem.~\ref{l:invariantlocal}, $D(\rho)$ is invariant under local unitaries, we can assume without loss of generality that $\ket{a_1}=\ket{0}$, $\ket{b_1}=\ket{0}$, $\ket{a_2}=\cos\frac{\beta}{2}\ket{0} + \sin\frac{\beta}{2}\ket{1}$ and $\ket{b_2}=\sum_{i=0}^{N-1}\alpha_i\ket{i}$ for $\beta\in[0,\pi]$ and $\alpha_i\in\reals$ with $\sum_i\alpha_i^2=1$, i.e. we can rotate the local states so as to eliminate relative phases. Further, since $\UA\in\rous$ in Eqn.~(\ref{eqn:sepstateD}), we can write $\UA=2\ketbra{u}{u}-I$ for some $\ket{u}=\cos\frac{\theta}{2}\ket{0} + e^{i\phi}\sin\frac{\theta}{2}\ket{1}$, where $\theta,\phi\in[0,2\pi)$. Via the latter, we can rewrite Eqn.~(\ref{eqn:sepstateD}) as:
    \begin{equation}\label{eqn:sepstateD2} \frac{1}{2}\sqrt{\sum_{i,j=1}^2\braket{b_i}{b_j}^2(\braket{a_i}{a_j}^2-\abs{\braket{a_i}{a_j} - 2\braket{a_i}{u}\braket{u}{a_j}}^2)}.
    \end{equation}
    Letting $\Delta$ denote the expression under the square root above, we have by substituting in our expressions for $\ket{a_1}$, $\ket{a_2}$, $\ket{b_1}$, $\ket{b_2}$, and $\ket{u}$ and algebraic manipulation that
    \begin{eqnarray}
        \Delta =\alpha_0^2\left[2\cos\beta
        \sin^2\theta - \sin\beta\sin(2\theta)\cos\phi\right]+\nonumber\\ 1 + \sin^2\theta - (\cos\beta\cos\theta + \sin\beta\sin\theta\cos\phi)^2.\label{eqn:delta}
    \end{eqnarray}
     Our goal is to maximize $\Delta$ with respect to $\alpha_0$ and $\beta$ (which define $\rho$), and then minimize with respect to $\theta$ and $\phi$ (which define $\UA$). Observe now that choosing $\phi=\theta=0$ reduces Eqn.~(\ref{eqn:delta}) to $\Delta = 1-\cos^2\beta$. Hence, unless $\beta=\pi/2$ (i.e. $\abs{\braket{a_1}{a_2}}=1/\sqrt{2}$), we can always achieve $D(\rho)<1/2$. Thus, set $\beta=\pi/2$. Consider next $\phi=0$, and leave $\theta$ unassigned. Then, Eqn.~(\ref{eqn:delta}) reduces to $\Delta = 1 - \alpha_0^2\sin(2\theta)$, from which it is clear that unless $\alpha_0=0$ (i.e. $\braket{b_1}{b_2}=0$), we can always achieve $D(\rho)<1/2$. Plugging these values of $\alpha_0$ and $\beta$ into Eqn.~(\ref{eqn:delta}), we have $\Delta = 1 +\sin^2\theta\sin^2\phi$, from which the claim follows.
\end{proof}
For two-qubit $\rho$, we thus have that with respect to $D(\rho)$ and the geometric discord, the maximally non-classical two qubit states of rank at most two are, up to local unitaries,
\[
    \frac{1}{2}\ketbra{0}{0}\otimes\ketbra{0}{0}+\frac{1}{2}\ketbra{+}{+}\otimes\ketbra{1}{1},
\]
where $\ket{+}=(\ket{0}+\ket{1})/\sqrt{2}$. As mentioned earlier, this matches known results with respect to the relative entropy of quantumness~\cite{GPACH11}. However, the latter analysis is not as general as it begins by with the assumption that $\braket{b_1}{b_2}=0$, whereas we allow arbitrary $\ket{b_1},\ket{b_2}$. It would be interesting to know whether this analysis can be extended to arbitrary rank two-qubit states.

\section{Conclusion} \label{scn:conclusion}We have shown that local unitary operations can indeed form the basis of a faithful non-classicality measure $D(\rho)$ with desirable properties such as: Closed-form expressions for $(2\times N)$-dimensional systems (which coincided with the expression for the geometric discord) and Werner states, a maximum value being attained only for pure maximally entangled states, and faithfulness. We further showed a direct connection between the degeneracy of the spectrum of local unitaries used in our measure and the ability for a state to remain undisturbed under local projective measurements of higher rank. Finally, we gave a characterization of the set of maximally non-classical, yet separable, $(2\times N)$-dimensional $\rho$ of rank at most two (according to $D(\rho)$, and hence also according to the geometric discord).

We leave open the following questions. For what other interesting classes of quantum states can a closed form expression for $D(\rho)$ be found? Can a better intuitive understanding of the interplay between the notions of ``disturbance under local measurements'' and ``disturbance under local unitary operations'' be obtained in higher dimensions? We have given an analytical characterization of all maximally non-classical rank-two $(2\times N)$-dimensional separable states --- we conjecture that higher rank two-qubit states, for example, achieve strictly smaller values of $D(\rho)$. Can this be proven rigorously and analytically? (We remark that a numerical proof for this conjecture was given in~\cite{GA11} for the geometric discord, for example.) What can the study of the generalized notion of quantum discord we defined in Sec.~\ref{scn:discord}, $\delta_{\ve{v}}(\rho)$, tell us about non-classical correlations?

%///////////////////////////////////////////////////////////////////////////////////////////////////////////////////
\section*{Acknowledgements}
%///////////////////////////////////////////////////////////////////////////////////////////////////////////////////

We thank Gerardo Adesso, Dagmar Bru\ss, Davide Girolami and Marco Piani for helpful discussions. Support from Canada's NSERC, CIFAR and MITACS programs is graciously acknowledged.\\

\noindent \emph{Note:} After completion of this paper, the author learned of independent work in preparation on a similar topic, which has been posted~\cite{SGRBI12} since the first version of the present paper appeared.

\bibliography{Sevmeasure}

\end{document}